\documentclass[PRA,twocolumn,
]{revtex4-1}
\usepackage{graphicx}
\usepackage{amsfonts}
\usepackage{amssymb}
\usepackage{amsthm}
\usepackage{array}
\usepackage{amsmath}
\usepackage{verbatim} 
\usepackage{hyperref}
\usepackage{color}
\usepackage{bbold}
\usepackage{epstopdf}
\usepackage{mathtools}
\usepackage{enumerate}
\usepackage{subcaption}
\usepackage{color}
\usepackage[normalem]{ulem}

\newtheorem{proposition}{Proposition}
\newtheorem*{proposition*}{Proposition}
\newtheorem{definition}{Definition}
\newtheorem*{definition*}{Definition}
\newtheorem{theorem}{Theorem}
\newtheorem*{theorem*}{Theorem}

\newtheorem*{corollary*}{Corollary}

\newtheorem{lemma}{Lemma}

\newcommand{\ket}[1]{\left\vert#1\right\rangle}
\newcommand{\bra}[1]{\left\langle#1\right\vert}

\newcommand{\abs}[1]{\left|#1\right|}

\captionsetup{justification   = raggedright,
              singlelinecheck = false}

\def\Tr{\mbox{Tr}}

\begin{document}
\title{Coherence, Quantum Fisher Information, Superradiance and Entanglement are Interconvertible Resources}
\author{Kok Chuan Tan}
\email{bbtankc@gmail.com}
\author{Seongjeon Choi}
\author{Hyukjoon Kwon}
\author{Hyunseok Jeong}
\email{jeongh@snu.ac.kr}
\affiliation{Center for Macroscopic Quantum Control \& Institute of Applied Physics, Department of Physics and Astronomy, Seoul National University, Seoul, 151-742, Korea}
\date{\today}

\begin{abstract}
We demonstrate that quantum Fisher information and superradiance can be formulated as coherence measures in accordance with the resource theory of coherence, thus establishing a direct link between metrological resources, superradiance and coherence. The arguments are generalized to show that coherence may be considered as the underlying fundamental resource for any functional of state that is first of all faithful, and second, concave or linear. It is also shown that quantum Fisher information and the superradiant quantity are in fact antithetical resources in the sense that if coherence were directed to saturate one quantity, then it must come at the expense of the other. Finally, a key result of the paper is to demonstrate that coherence, quantum Fisher information, superradiant quantity, and entanglement are mutually interconvertible resources under incoherent operations. 
\end{abstract}

\maketitle

\section{Introduction}

The study of quantum resources has seen another revival of interest over the last several years due to the recent identification and characterization of a resource theory of coherence~\cite{Baumgratz2014}. While the coherence of quantum systems has always, in some form or another, been recognized as a fundamental aspect of the field, the newly developed resource theory now provides a framework that allows for a more quantitative understanding of the subject. Since this development, coherence has now been studied within the contexts of quantum correlations~\cite{Tan2016, Streltsov15,  Ma2016}, macroscopic quantumness~\cite{Yadin2015, Kwon2017}, nonclassical light~\cite{Tan2017,Zhang2016,Xu2016}, interferometric experiments~\cite{Wang2017}, error correction~\cite{Tan2017-2}, quantum estimation~\cite{Giorda2016}, and quantum algorithms~\cite{Hillery2016, Matera2016}. There are also several different variations of the theory~\cite{Chitambar2017}, such as a recent proposal for a resource theory of superposition~\cite{Theurer2017} which generalizes the concept of coherence. An extensive overview of the subject can be found at~\cite{Streltsov2016}.

An area that has also garnered considerable interest concerns the convertibility of coherence into nonclassical correlations such as entanglement~\cite{Streltsov15,Ma2016, Killoran2016,Regula2017}. Already, an experimental conversion of coherence to quantum correlations and vice versa has been recently reported~\cite{Wu2017}. Given that such quantum correlations often find applications in a variety of scenarios, the study of such interconversion processes allow for greater flexibility when extracting practical advantages out of nonclassical quantum resources. 

In this paper, we further explore these ideas by demonstrating how quantum Fisher information (QFI), superradiance, entanglement and coherence may be related to each other via coherence. A key result of the paper is to demonstrate that coherence, QFI, superradiant quantity, and entanglement are in fact mutually interconvertible resources under incoherent operations. A central theme of these results is the optimal application of non coherence increasing operations, otherwise called incoherent operations, on quantum states.

\section{Preliminaries}

We first review some elementary concepts concerning coherence measures. The notion of coherence that we will employ in this paper will be the one identified in~\cite{Baumgratz2014}, where a set of axioms is identified in order to specify a reasonable measure of quantum coherence. The axioms are as follows:

For a given fixed basis $\{ \ket{i} \}$, the set of incoherent states $\cal I$ is the set of quantum states with diagonal density matrices with respect to this basis. Incoherent completely positive and trace preserving maps (ICPTP)
 are maps that map every incoherent state to another incoherent state. Given this, we say that  $\mathcal{C}$ is a measure of quantum coherence if it satisfies following properties:
(C1) $\mathcal{C}(\rho) \geq 0$ for any quantum state $\rho$ and equality holds if and only if $\rho \in \cal I$.
(C2a) The measure is non-increasing under a ICPTP map $\Phi$ , i.e., $\mathcal{C}(\rho) \geq \mathcal{C}(\Phi(\rho))$.
(C2b) Monotonicity for average coherence under selective outcomes of ICPTP:
$\mathcal{C}(\rho) \geq \sum_n p_n \mathcal{C}(\rho_n)$, where $\rho_n = K_n \rho K_n^\dagger/p_n$ and $p_n = \Tr [K_n \rho K^\dagger_n ]$ for all $K_n$ with $\sum_n K^\dagger_n K_n = \mathbb 1$ and $K_n {\cal I} K_n^\dagger \subseteq \cal I$.
(C3) Convexity, i.e. $\lambda \mathcal{C}(\rho) + (1-\lambda) \mathcal{C}(\sigma) \geq \mathcal{C}(\lambda \rho + (1-\lambda) \sigma)$, for any density matrix $\rho$ and $\sigma$ with $0\leq \lambda \leq 1$.

One may check that a particular operation is incoherent if each of its Kraus operators\cite{kraus} always maps a diagonal density matrix to another diagonal density matrix. One important example of such an operation is the CNOT gate.

\section{Coherence and Quantum Fisher Information}

 We now consider a standard metrological scenario. One first begins with the signal Hamiltonian, which is denoted $\theta H_S$ . The signal Hamiltonian encodes a signal on a probe state, which is a specially prepared quantum state $\rho$ of $N$ particles, or more if one were to include any possible ancillary quantum particles. The Hamiltonian generates the dynamics $\rho(t) = e^{-iH_S \theta t} \rho e^{iH_S \theta t}$ and after some time $t = \tau$, a measurement is then performed on the state $\rho(\tau)$, the outcome of which is specifically designed in order to obtain the most precise estimate of the value of $\theta$. The optimal sensitivity is known to be given by the quantum Cram{\'e}r-Rao bound~\cite{Braunstein1994} $\delta\theta \geq \frac{1}{\sqrt{\nu\mathcal{F}(\rho(\tau),H_S)}}$ where $\nu$ is the number of measurements performed and $\mathcal{F}(\rho(\tau),H_S)$ is the QFI of a state with respect to $H_S$, given by 
$$
\mathcal{F}(\rho(\tau),H_S) = 2\sum_{i,j}\frac{(\lambda_i - \lambda_j)^2}{\lambda_i + \lambda_j } |\langle i | H_S |j\rangle|^2,
$$
where $\lambda_i$ and $\ket{i}$ are eigenvalues and eigenstates of $\rho(\tau)$, respectively. We are primarily interested in the sensitivity of the state $\rho$ locally at the point $t = 0$, so $\mathcal{F}(\rho,H_S)$ will be the figure of merit we will consider.

A class of Hamiltonians of particular interest is the class of  \textit{local Hamiltonians}. These Hamiltonians are a sum of $N$ independent Hamiltonians acting on individual particles, i.e. a Hamiltonians of the form $H_S = \sum_{i=1}^N h^{(i)}$ where $h^{(i)}$ represents a nontrivial interaction acting only on the $i$th particle that is not proportional to the identity. We will also assume that $h^{(i)}$ does not depend on the number $N$.  An example of a Hamiltonian of this type is a uniform magnetic field in the $z$ direction acting on a collection of $N$ spins, where in this case $h^{(i)} \propto \sigma_z^{(i)}$, and $\sigma_z^{(i)}$ are the Pauli $z$ operators acting on $i$th particle. As coherence is a basis dependent concept, we will adopt the basis which is naturally defined by the eigenvectors of $h^{(i)}$. 
This defines a set of local bases $\{ \ket{a^{(i)}} \}$ for the $i$th particle where $a^{(i)} = 1, \ldots, d$, and $d$ is the dimension of the $i$th particle. Consequently, we will consider the coherence with respect to this set of local bases. Local bases were also previously studied in~\cite{Tan2016}, which noted their connection with quantum correlations. 


For any signal Hamiltonian of the form $H_S = \sum_{i=1}^N h^{(i)}$, and a pure state probe $\ket{\psi}$, let us consider the maximal QFI reachable via all possible incoherent operations $\Phi \in \mathrm{ICPTP}$ on $\ket{\psi}$, i.e. $\displaystyle \max_{\Phi \in \mathrm{ICPTP} }\mathcal{F}(\Phi(\ket{\psi}\bra{\psi}),H_S)$. The incoherent operation here is completely general, with no constraints otherwise. Here, we note that there is an important differentiation between $N$, which captures the number of particles $H_S$ is interacting with, and the actual physical number of particles, which can be any arbitrary number so long as it is reachable via an incoherent operation. 

In fact, for any coherent pure state, we can always achieve Heisenberg scaling via a suitable incoherent operation, as demonstrated by the following Lemma:

\begin{lemma} \label{lem::purestates}
For every coherent pure state $\ket{\psi}$ and locally interacting Hamiltonian $H_S$, there always exists an incoherent operation $\Phi$ that achieves $\mathcal{F}(\Phi(\ket{\psi}\bra{\psi}),H_S) > 0$ that scales with $\mathcal{O}(N^2)$. The measurement that achieves this Heisenberg limited scaling can also be performed incoherently. 
\end{lemma}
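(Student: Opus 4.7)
The plan is to give a constructive proof. I would exhibit an explicit incoherent $\Phi$ that concentrates the coherence of $\ket{\psi}$ into a GHZ-like state on $N$ ancillary qubits, and then show that its QFI with respect to $H_S$ scales as $N^2$.

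First I would exploit the coherence of $\ket{\psi} = \sum_{\vec a} c_{\vec a}\ket{\vec a}$ to pick two local-basis vectors $\ket{\vec a},\ket{\vec b}$ with $\vec a \neq \vec b$ and $c_{\vec a},c_{\vec b}\neq 0$. These must differ on at least one site $i$, i.e.\ $a_i \neq b_i$. Next I would apply the selective incoherent measurement with Kraus operators $K_1 = \ketbra{\vec a}{\vec a}+\ketbra{\vec b}{\vec b}$ and $K_2 = \mathbb{1}-K_1$; both are diagonal in the local basis and hence manifestly incoherent. By the selective-outcome monotonicity (C2b) we may post-select on the $K_1$ branch, which occurs with probability $|c_{\vec a}|^2+|c_{\vec b}|^2 > 0$ and produces the coherent two-level pure state $\alpha\ket{\vec a}+\beta\ket{\vec b}$ with $\alpha,\beta\neq 0$.

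Second I would adjoin $N$ ancilla qubits prepared in the incoherent state $\ket{0}^{\otimes N}$ and apply a cascade of generalized CNOTs controlled on site $i$, each defined so that $\ket{a_i}$ leaves the target as $\ket{0}$ while $\ket{b_i}$ flips it to $\ket{1}$. Each such gate is a permutation of the local product basis and hence incoherent, so the composite map is incoherent. The output is the GHZ-like state $\alpha\ket{\vec a}\ket{0}^{\otimes N}+\beta\ket{\vec b}\ket{1}^{\otimes N}$. Using the pure-state identity $\mathcal{F}(\ket{\phi},H_S) = 4\,\mathrm{Var}(H_S)_{\ket{\phi}}$, and noting that $h^{(k)}\ket{0}=\epsilon_0\ket{0}$, $h^{(k)}\ket{1}=\epsilon_1\ket{1}$ with $\epsilon_0\neq \epsilon_1$ (since $h^{(k)}$ is nontrivial and $N$-independent), one obtains $\mathrm{Var}(H_S) = |\alpha|^2|\beta|^2 N^2(\epsilon_1-\epsilon_0)^2$, so that $\mathcal{F} = 4|\alpha|^2|\beta|^2 N^2(\epsilon_1-\epsilon_0)^2 = \mathcal{O}(N^2)$ and is strictly positive.

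For the measurement claim, I would argue that after the signal evolution $e^{-iH_S\theta}$ one applies the same CNOT cascade again (CNOTs are self-inverse and incoherent), funneling the full $N$-fold relative phase back into the two-dimensional subspace $\mathrm{span}\{\ket{\vec a},\ket{\vec b}\}$ of the original register while leaving the ancillas decoupled in $\ket{0}^{\otimes N}$. All subsequent phase-resolving work happens inside this two-level subspace, whose coherence was already carried by the input probe $\ket{\psi}$; no additional superposition structure must be injected by the apparatus. The main obstacle, and the step that will require the most care, is justifying this last phase-discrimination step as genuinely incoherent in the resource-theoretic sense: a POVM strictly diagonal in the local basis cannot by itself distinguish $\theta$-dependent phases, so the argument must lean on the fact that the two-dimensional phase-sensitive measurement is implementable using only the coherence present in the probe together with incoherent CNOT processing, rather than coherence supplied by the measurement device.
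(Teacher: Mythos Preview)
Your construction of the GHZ-like state via incoherent CNOTs and the QFI computation via the pure-state variance formula are correct and close in spirit to the paper: the paper also applies a CNOT-type map $\ket{\psi} \to \sum_i \sqrt{\lambda_i}\ket{i\ldots i}$ and then extracts $N^2$ scaling. Two differences are worth flagging. First, the paper does not post-select on a two-level subspace before the CNOTs; it keeps the full superposition and only isolates the two leading terms at the \emph{measurement} stage. Your post-selection step is not literally an ICPTP map---invoking (C2b) does not make it one---but this is easily repaired by tagging the two Kraus branches with a classical register, so that the output is a classical mixture whose QFI still contains the $N^2$ term with an $N$-independent prefactor. Second, the paper does not compute $\mathcal{F}$ directly; it exhibits a concrete measurement and bounds the sensitivity via the error-propagation formula $\Delta\mathcal{M}'^2/|\partial_\tau\langle\mathcal{M}'\rangle|^2$. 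Your variance route is actually cleaner for establishing the first half of the lemma.

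The part you flag as the main obstacle---the incoherent measurement---has a resolution simpler than the one you are reaching for, and your proposed line (``lean on the coherence already present in the probe'') is the wrong direction. You are right that a POVM whose \emph{elements} are diagonal cannot resolve the accumulated phase. But the incoherence condition is not diagonality of the POVM elements; it is $K_n \mathcal{I} K_n^\dagger \subseteq \mathcal{I}$ for each Kraus operator. Any rank-one Kraus operator of the form $K_n = \ket{c_n}\bra{\phi_n}$, with $\ket{c_n}$ a computational-basis label and $\ket{\phi_n}$ \emph{arbitrary}, satisfies this automatically, since its output is always proportional to $\ket{c_n}\bra{c_n}$. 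Hence a destructive projective measurement in \emph{any} basis, with the outcome recorded classically, is an incoherent operation. The paper uses exactly this: its Kraus operators project onto a basis containing $\frac{1}{\sqrt 2}(\ket{1\ldots1}+\ket{2\ldots2})$, followed by a parity-$x$ readout, and output only classical labels; the error-propagation formula then gives sensitivity $\sim 1/(N\phi_{\min})^2$. Once you use this observation, your disentangling-CNOT step is unnecessary---you may simply measure the GHZ-like state destructively in the phase-sensitive basis and record the outcome.
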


\begin{proof}

Let us first consider $H_S = \sum_{j=1}^N h^{(j)}$. For each $h^{(j)}$, let $\ket{i^{(j)}_{max}}$ and $\ket{i^{(j)}_{min}}$ be eigenvectors that corresponds to eigenvectors for the maximum and minimum eigenvalues $\lambda_{\rm max} (h^{(j)})$ and $\lambda_{\rm min} (h^{(j)})$, respectively. In this 2 dimensional subspace, let us define the Pauli operator $\sigma_x^{(j)} \coloneqq \ket{i^{(j)}_{max}}\bra{i^{(j)}_{min}} + \ket{i^{(j)}_{min}}\bra{i^{(j)}_{max}}$. 

Let $\ket{\psi} = \sum_i \sqrt{\lambda}_i \ket{i}$,  where $\ket{i}$ are eigenstates of $H_S$ which construct the incoherent basis. Without any loss in generality, we assume that the coefficients are positive real and $\lambda_1 \geq \lambda_2 \geq \ldots $. We will also assume that $\ket{i=1} = \ket{i^{(1)}_{max}}$ and $\ket{i=2} = \ket{i^{(1)}_{min}}$ since this is just a relabelling of the basis which can be done using an incoherent unitary. The `extra' particles may be considered ancillary particles that assist during the metrological process.

We now apply an incoherent CNOT type operation that performs the map $U : \ket{\psi} \rightarrow \sum_i \sqrt{\lambda}_i \ket{i \ldots i} $  and then let the state evolve according to the Hamiltonian $H_S$. Let us now consider only the the first 2 terms of $U\ket{\psi}$, which under $H_S$ evolves as $$ \sqrt{\lambda}_1 \ket{1 \ldots 1} +  \sqrt{\lambda}_2 \ket{2 \ldots 2} \rightarrow \sqrt{\lambda}_1 \ket{1 \ldots 1} +  \sqrt{\lambda}_2 e^{i\phi\tau}\ket{2 \ldots 2}$$ up to an overall phase factor. We have $\phi = \sum_{j=1}^N \phi_j $ where $\phi_j \coloneqq \lambda_{max}(h^{(j)})-\lambda_{min}(h^{(j)})$.

We will choose some basis on the Hilbert space space of $N$ particles $\{\ket{a_i} \}$ for $i=1,2, \ldots$ such that $\ket{a_1} = \frac{1}{\sqrt2}(\ket{1\ldots1}+\ket{2 \ldots 2})$. Define the following POVM:

$$
M_{(\vec{c},i)} \coloneqq \ket{ \pi(\vec{c}) , i}  \prod_{j=1}^N \bra{(-)^{c_j}} a_i\rangle\bra{a_i},
$$
where $\vec{c} \coloneqq (c_1,\ldots , c_N)$, $c_j = 0,1$, and $\pi(\vec{c}) = \prod^{N}_{j=1}(-1)^{c_j}$. The quantum operation $\mathcal{M}$ is then defined as $\mathcal{M}(\rho) = \sum_{(\vec{c},i)} M_{(\vec{c},i)} \rho M^\dag_{(\vec{c},i)}$. This operation is incoherent and is effectively an incoherent implementation of 2 measurements: a projection onto the basis $\{ \ket{a_i} \}$ followed by a parity measurement on the $x$ axis. Suppose we perform the naive protocol where if the measurement outcome is $i=1$, we keep the parity measurement outcome, and assign a value of zero otherwise. Let us call this measurement $\mathcal{M}'$. 

We can then verify using the error propagating formula that $$\frac{\Delta\mathcal{M}'^2}{\abs{\partial_\tau \langle \mathcal{M}' \rangle }^2} = \frac{(\sqrt{\lambda_1}+\sqrt{\lambda_2})^2}{2}\frac{1}{\phi^2}$$

Finally, we observe that $\phi_i \geq \phi_{\min} \coloneqq \min_j \phi_j$ and $\phi \geq N\phi_{\min}$. $\phi_{\min}$ depends only on $\{h^{(i)}\}$, all of which do not contain any dependence on $N$, and neither does the coefficient $\frac{(\sqrt{\lambda_1}+\sqrt{\lambda_2})^2}{2}$, which depends only on the initial state. As such, we have $$\frac{\Delta\mathcal{M}'^2}{\abs{\partial_\tau \langle \mathcal{M}' \rangle }^2} \leq \frac{(\sqrt{\lambda_1}+\sqrt{\lambda_2})^2}{2}\frac{1}{N^2 \phi_{\min}} \sim \mathcal{O}(\frac{1}{N^2})$$

This proves that for every pure coherent state, Heisenberg limited scaling is reachable using only incoherent operations.
\end{proof}

A natural consequence of the Lemma~1 is that the trivial coherence measure defined by $\mathcal{C}(\rho) = 1$ iff $\rho$ is coherent acquires a metrological interpretation. As this fact is not one of the main themes of this article, a  short discussion of this will be deferred to the Appendix.

So far, we have only considered pure states and a single metrological experiment. However, we can also consider the case of general mixed states where $M$ independent measurements are performed: 

%
%
%
%
%
%

\begin{definition} [Distributed coherence of QFI] \label{thm::monotonicity}
The distributed QFI for a pure state $\ket{\psi}$ is defined to be 
\begin{align*}
&\mathcal{C}^{M}_{F}(\ket{\psi}) \coloneqq \max_{\Phi \in \mathrm{ICPTP}}\sum_{i=1}^M \mathcal{F}\{ \mathrm{Tr}_{ P(i)^c}[\Phi(\ket{\psi}\bra{\psi})], H_S^{(i)}\}
\end{align*} where $H_S^{(i)}$ is the $i$th local Hamiltonian of the form $H_S^{(i)} \coloneqq \sum_{j=1}^N h^{(i,j)}$ and $h^{(i,j)}$ are nontrivial. $P(i)$ refers to the $i$th partition of particles in the state $\Phi(\ket{\psi}\bra{\psi})$ which is partitioned into $M$ collections of particles that separately interact with the Hamiltonians $H^{(i)}_S$. The partial trace $\mathrm{Tr}_{ P(i)^c}$ is to be interpreted as tracing out every particle except the ones in $P(i)$.

The generalization to mixed states is obtained via the convex roof construction 
$$
\mathcal{C}^{M}_F(\rho) \coloneqq \min_{\{p_i, \ket{\psi_i}\}} \sum_{i} p_i \mathcal{C}^{M}_{F}(\ket{\psi_i})
$$ where the minimization is over all pure state decompositions of the form $\rho = \sum_i p_i \ket{\psi_i}\bra{\psi_i}$. (See~\cite{Yuan2015} for another example of such constructions)
\end{definition}   

This definition corresponds to a scenario where a quantum state $\Phi(\ket{\psi}\bra{\psi})$ is prepared via an incoherent operation, partitioned into $M$ separate subsystems, and then distributed to $M$ different parties each performing an independent metrological experiment. Equivalently, it can also be interpreted as a single party scenario where $M$ independent metrological experiments are performed.

It turns out that $\mathcal{C}_F^M$ is a valid coherence measure for every $M$ and $H_S^{(i)}$. 

\begin{theorem} \label{thm::distFish}
$\mathcal{C}^M_F$ is a coherence measure.
\end{theorem}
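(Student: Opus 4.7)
The plan is to verify the Baumgratz--Cramer--Plenio axioms (C1), (C2a), (C2b), (C3) for $\mathcal{C}^M_F$. Because the measure is defined via a convex roof, it suffices to establish (C1) together with a strong monotonicity on \emph{pure states}: (C3) will follow automatically from the convex roof construction, (C2b) for mixed states will follow by inserting the optimal pure-state decomposition of $\rho$ term by term, and (C2a) is an immediate consequence of (C2b) combined with (C3).

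For (C1), non-negativity is clear since $\mathcal{F}\geq 0$. If $\rho$ is incoherent then $\rho = \sum_i \lambda_i \ket{i}\bra{i}$ in the incoherent basis; since ICPTP maps preserve incoherence and each $H_S^{(i)}$ is diagonal in that basis, the QFI contribution on each branch $\ket{i}$ vanishes, and the convex roof returns zero. Conversely, if $\rho$ is coherent, then every pure-state decomposition must contain at least one coherent $\ket{\phi_k}$ (otherwise the mixture would be diagonal in the incoherent basis). Applying Lemma~\ref{lem::purestates} to $\ket{\phi_k}$ on a single partition while leaving the remaining partitions idle produces a strictly positive QFI, so $\mathcal{C}^M_F(\ket{\phi_k})>0$ and therefore $\mathcal{C}^M_F(\rho)>0$.

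The main work is pure-state strong monotonicity. Given incoherent Kraus operators $\{K_n\}$ with $\ket{\psi_n} = K_n \ket{\psi}/\sqrt{p_n}$ and $p_n = \|K_n\ket{\psi}\|^2$, I aim for $\mathcal{C}^M_F(\ket{\psi}) \geq \sum_n p_n \mathcal{C}^M_F(\ket{\psi_n})$. Let $\Phi_n$ be an ICPTP attaining the maximum for $\ket{\psi_n}$. I build a single ICPTP $\Phi$ on $\ket{\psi}$ that (i) dilates $\{K_n\}$ coherently into an ancillary register holding the outcome $n$, (ii) uses incoherent CNOT gates to broadcast this register into $M$ orthogonal copies, (iii) conditionally applies $\Phi_n$ to the primary system, and (iv) assigns one copy of the register to each partition $P(i)$. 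The output state is the classical-quantum state $\sum_n p_n\, \Phi_n(\ket{\psi_n}\bra{\psi_n}) \otimes (\ket{n}\bra{n})^{\otimes M}$. Since each $H_S^{(i)}$ acts trivially on the classical flags and the QFI is block-additive across orthogonal classical sectors, tracing to partition $i$ and computing the QFI yields $\sum_n p_n\, \mathcal{F}(\mathrm{Tr}_{P(i)^c}[\Phi_n(\ket{\psi_n}\bra{\psi_n})],H_S^{(i)})$; summing over $i$ gives exactly $\sum_n p_n \mathcal{C}^M_F(\ket{\psi_n})$, which is upper bounded by $\mathcal{C}^M_F(\ket{\psi})$ because $\Phi$ is only one admissible choice. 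The main obstacle is precisely this broadcasting step: without a classical flag in every partition, convexity of the QFI would push the inequality the wrong way on the flagless partitions and defeat the bound.

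Once pure-state (C2b) is in hand, (C3) follows by concatenating the optimal decompositions of $\rho_k$ to obtain a (generically suboptimal) decomposition of $\sum_k p_k \rho_k$; (C2b) for mixed states follows by starting from an optimal decomposition of $\rho$ and applying the pure-state inequality termwise, using that $\sum_n \|K_n\ket{\phi_k}\|^2=1$; and (C2a) follows by writing $\Lambda(\rho)=\sum_n p_n \rho_n$ and chaining (C3) with mixed-state (C2b).
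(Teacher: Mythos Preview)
Your proposal is correct and follows essentially the same route as the paper: prove faithfulness via Lemma~\ref{lem::purestates}, get convexity for free from the convex roof, and reduce strong monotonicity to the pure-state case by building a single ICPTP that records the Kraus outcome in a classical flag, applies the optimal $\Phi_n$ conditioned on that flag, and then invokes block-additivity of the QFI on classical--quantum states. The paper's chain of inequalities (Lines 4--6) is exactly this argument, written more tersely with a single ancilla label $\ket{i}\bra{i}$ appended to both state and Hamiltonian; your explicit broadcasting of the flag into $M$ copies (one per partition) is a welcome clarification of why the flag survives every partial trace $\mathrm{Tr}_{P(i)^c}$, a point the paper leaves implicit.
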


\begin{proof}
We observe that if $\rho$ is incoherent, then it is diagonal w.r.t. $\sum_i H_S^{(i)}$, and  $\mathcal{F}(\rho,H_S^{(i)}) = 0$. Resorting to any incoherent operation $\Phi$ will not improve the situation as it always maps a diagonal state to another diagonal state so we must have have $\mathcal{C}^M_F(\rho) = 0$. Lemma~\ref{lem::purestates} then demonstrates that  if $\rho$ is coherent, then $\mathcal{C}^M_F(\rho) > 0$ since $\rho$ has to have at least one pure state in its pure state decomposition that is coherent. This proves that $\rho$ is incoherent iff $\mathcal{C}^M_F(\rho) = 0$, so the measure is faithful. 

Convexity is implied by the convex roof construction. Therefore, we only need to prove strong monotonicity. 

To prove monotonicity, we only need to establish that the measure is strongly monotonic over pure states (a short proof of this fact is presented in the Appendix). We see that this is true from the following chain of inequalities:

\begin{align}
& \sum_i p_i\mathcal{C}_F^M(\ket{\psi_i}) \\
&= \sum_ip_i \mathcal{C}_F^M(K_i \ket{\psi} / \sqrt{p_i}) \\
&=  \sum_{i} p_i  \max_{\Phi_{i} \in \mathrm{ICPTP}}\sum_{k=1}^M \notag\\ & \quad \mathcal{F}\{ \mathrm{Tr}_{P(k)^c}[\Phi_{i}(K_i\ket{\psi}\bra{\psi}K^\dag_i/p_i)], H_S^{(k)}\}\\
&=   \max_{\Phi_{i} \in \mathrm{ICPTP} }\sum_{i}p_i
\sum_{k=1}^M \mathcal{F}\{\mathrm{Tr}_{P(k)^c}
\notag\\&\quad[\Phi_{i}(K_i\ket{\psi}\bra{\psi}K^\dag_i/p_i)\otimes \ket{i}\bra{i}],  H_S^{(k)}\otimes  \ket{i}\bra{i}\} \\
&=   \max_{\Phi_{i} \in \mathrm{ICPTP} }
\sum_{k=1}^M \mathcal{F}\{\mathrm{Tr}_{P(k)^c}
\notag\\&\quad[\sum_{i}\Phi_{i}(K_i\ket{\psi}\bra{\psi}K^\dag_i)\otimes \ket{i}\bra{i}], \notag\\ & \quad \sum_{i}H_S^{(k)}\otimes \ket{i}\bra{i}\} \\
&\leq \max_{\Phi \in \mathrm{ICPTP} }
\sum_{k=1}^M \mathcal{F}\{\mathrm{Tr}_{P(k)^c}[\Phi(\ket{\psi}\bra{\psi})], H_S^{(k)}\} \\
&=\mathcal{C}^M_F(\rho)
\end{align} where the inequality in Line 6 comes from the observation that the optimization in Line~5 is a special case of the optimization over $\Phi$ in Line~7.

\end{proof}

Note that it is possible to generalize the result to arbitrary signal Hamiltonians rather than the local Hamiltonians which is the focus of this article. To see this, simply set $N = 1$ so $H_S = h^{(1)}$ where $h^{(1)}$ is in principle any arbitrary nontrivial Hamiltonian. Nonetheless, the case of local Hamiltonians is interesting due to its connections with multipartite quantum correlations. 

We also note that Thm.~\ref{thm::monotonicity} applies for every $H_S$ and $M$. In particular, for the case $M=1$, $\mathcal{C}^1_F(\rho)$ is just the standard Fisher information, optimized over all incoherent operations performed on the state $\rho$. However, a closer inspection will reveal that for small $M$, the measure will saturate for relatively slow levels of coherence. We already see this from the fact that for $M=1$ a maximally coherent qubit can already be converted to a GHZ state via a series of CNOT operations, which is sufficient to saturate the QFI and $\mathcal{C}^1_F$ for $H_S = \sum_i \sigma_z^{(i)}$. As such, depending on the system being considered, larger values of $M$ may lead to better coherence measures.

\section{Coherence and supperradiance}

In this section, we demonstrate that the effect chiefly responsible for the supperradiant phenomena can also be attributed to coherence. 

We make some necessary definitions. Consider a system consisting of $N$  subsystems with an excited and a ground state denoted $\ket{e}$ and $\ket{g}$ respectively. We note that this does not necessarily imply that the state is composed of $N$ two level systems, only that the optical transition between these two states out of a possible $d$ levels for each of the $N$ subsystems are addressed. This optical transition corresponds to wavelength $\lambda$. We define the raising and lowering operators acting on the $i$th subsystem as $D^{(i)}_+ \coloneqq \ket{e^{(i)}}\bra{g^{(i)}}$ and $D^{(i)}_- \coloneqq \ket{g^{(i)}}\bra{e^{(i)}}$ respectively. 

In standard florescence, it is assumed that each of these two level systems interacts independently with the radiation field, during which the total rate of photon emission is simply the sum $\sum_i W^{(i)}_1 \propto \sum_i \langle D^{(i)}_+D^{(i)}_- \rangle$. This implies the emission rate at most scales with $N$, which is intuitively the maximum possible number of excited states at any moment.

In the superradiant regime, it is assumed that the linear dimension of the $N$ systems is small with respect to $\lambda$, so there is a collective, coherent interaction with the radiation field. In this case, the $N$ systems collectively behave like a single dipole, in which case the emission rate is \cite{Gross1982} $$W_N \propto \langle \sum_{i} D^{(i)}_+ \sum_{j} D^{(j)}_- \rangle = \sum_i \langle D^{(i)}_+D^{(i)}_- \rangle + \sum_{i\neq j} \langle D^{(i)}_+D^{(j)}_- \rangle.$$

We see that the second to last term $\sum_i \langle D^{(i)}_+D^{(i)}_- \rangle$ is the sum of single system emissions. The last term $\sum_{i\neq j} \langle D^{(i)}_+D^{(j)}_- \rangle$ is due to the collective behaviour of the subsystems and the source of supperradiant phenomena, which can potentially scale with $N^2$. We will refer to this as the \textit{superradiant quantity}.

In a similar vein as the case for QFI, we consider the following quantity: $$\mathcal{C}_{S}(\ket{\psi})\coloneqq \max_{\Phi \in \mathrm{ICPTP}} \sum_{i \neq j}^N\mathrm{Tr} [ \Phi(\ket{\psi}\bra{\psi})D^{(i)}_+D^{(j)}_- ],$$ which is essentially the maximal superradiant quantity that is achievable via an incoherent operation for the pure state $\ket{\psi}$. Here, the incoherent basis is specified by the excited/ground states for each subsystem.
In this case, the quantity $N$ refers to the number of transitions being addressed. We generalize this to arbitrary mixed state by applying the convex roof construction as before:

\begin{definition}[Coherence of superradiance]

The coherence of superradiance is defined as $$\mathcal{C}_{S}(\rho)\coloneqq \min_{\{p_i, \ket{\psi_i}\}}\sum_{i} p_i\mathcal{C}_{S}(\ket{\psi_i}). $$ The minimization is over all pure state decompositions of the state $\rho = \sum_i p_i \ket{\psi_i}\bra{\psi_i}$.
\end{definition}

Interestingly, it turns out that this, too, is a valid coherence measure.

\begin{theorem} \label{thm::supercoh}
The superradiant coherence $\mathcal{C}_{S}(\rho)$ is a valid coherence measure for every $N \geq 2$.
\end{theorem}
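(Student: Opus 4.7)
The plan is to verify the three coherence axioms (C1), (C2b), (C3) for $\mathcal{C}_S$. Convexity (C3) is immediate from the convex-roof construction, and the reduction of strong monotonicity on mixed states to strong monotonicity on pure inputs is the same one used in the proof of Theorem~\ref{thm::distFish} (and detailed in the Appendix). So the substantive work is (i) faithfulness, and (ii) strong monotonicity on pure states.

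For (i), the easy direction is routine: if $\rho$ is incoherent then every ICPTP map keeps $\Phi(\rho)$ diagonal in the product basis, while $D^{(i)}_+D^{(j)}_-$ with $i\neq j$ has only off-diagonal matrix elements there, so the superradiant expectation is identically zero and $\mathcal{C}_S(\rho)=0$. For the converse it suffices, via the convex roof, to exhibit for every coherent pure $\ket\psi$ an incoherent $\Phi$ with $\Tr[\Phi(\ket\psi\bra\psi)\sum_{i\neq j}D^{(i)}_+D^{(j)}_-]>0$. Writing $\ket\psi=\sum_k c_k\ket{I_k}$ in the product basis, I would fix two basis labels with $c_1,c_2\neq 0$ and build $\Phi$ as the composition of two incoherent maps: the first has Kraus operators $\{P_{12},\ket{I_3}\bra{I_3},\ket{I_4}\bra{I_4},\ldots\}$ with $P_{12}=\ket{I_1}\bra{I_1}+\ket{I_2}\bra{I_2}$ (each is diagonal, hence incoherent, and they sum to $\mathbb 1$), and the second is a basis permutation unitary sending $\ket{I_1}\mapsto\ket{e g g \ldots g}$ and $\ket{I_2}\mapsto\ket{g e g \ldots g}$ (any permutation of the product basis is incoherent). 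The output state is $(c_1\ket{e g g \ldots g}+c_2\ket{g e g \ldots g})(c_1^*\bra{e g g \ldots g}+c_2^*\bra{g e g \ldots g})+\sum_{j\geq 3}|c_j|^2\ket{I_j'}\bra{I_j'}$. The diagonal branches contribute zero to the superradiant expectation, and within the coherent block a short check shows that only the pairs $(i,j)=(1,2)$ and $(2,1)$ in the sum give nonzero matrix elements, yielding $2\,\mathrm{Re}(c_1^*c_2)$; after absorbing the relative phase of $c_1,c_2$ by a diagonal (hence incoherent) unitary this becomes $2|c_1||c_2|>0$.

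For (ii), I would transcribe the chain of inequalities from the proof of Theorem~\ref{thm::distFish} with only cosmetic changes. Setting $p_n=\Tr[K_n\ket\psi\bra\psi K_n^\dagger]$, $\ket{\psi_n}=K_n\ket\psi/\sqrt{p_n}$, and $O=\sum_{i\neq j}D^{(i)}_+D^{(j)}_-$, the quantity $\sum_n p_n\mathcal{C}_S(\ket{\psi_n})$ equals $\max_{\{\Phi_n\}}\sum_n p_n\Tr[\Phi_n(\ket{\psi_n}\bra{\psi_n})O]$. Appending a flag register $\sum_n\ket n\bra n$ rewrites this as $\Tr\bigl[\bigl(\sum_n\Phi_n(K_n\ket\psi\bra\psi K_n^\dagger)\otimes\ket n\bra n\bigr)(O\otimes\mathbb 1)\bigr]$, and the map $\ket\psi\bra\psi\mapsto\sum_n\Phi_n(K_n\ket\psi\bra\psi K_n^\dagger)\otimes\ket n\bra n$ is a specific incoherent operation, so it lies in the feasible set of the single-$\Phi$ maximisation defining $\mathcal{C}_S(\ket\psi)$ (with $O$ extended trivially to the flag). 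Hence $\sum_n p_n\mathcal{C}_S(\ket{\psi_n})\leq\mathcal{C}_S(\ket\psi)$.

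The main obstacle is the faithfulness direction. In the QFI case, Lemma~\ref{lem::purestates} already delivers nonzero (in fact $N^2$-scaling) Fisher information from any coherent pure state via a plain CNOT fan-out. The same fan-out is useless for superradiance, because $\ket{e\cdots e}\bra{g\cdots g}$-type coherence is annihilated by every $D^{(i)}_+D^{(j)}_-$. What is needed instead is coherence between basis states that differ by a single excitation exchange across two subsystems. Producing this pattern while ensuring that no other pair $(i,j)$ in the sum picks up stray matrix elements is the step that demands care, and is the reason the permutation above is chosen to make the two target basis states agree with $\ket{g\cdots g}$ outside positions $1$ and $2$.
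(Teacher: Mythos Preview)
Your proposal is correct and follows essentially the same route as the paper: convexity from the convex roof, strong monotonicity reduced to pure states via the Appendix proposition, faithfulness by explicitly steering two nonzero amplitudes onto a single-excitation-exchange pair, and pure-state monotonicity by observing that branch-wise optimal incoherent maps assemble into one global incoherent candidate. The only cosmetic differences are that the paper reaches the $\ket{eg\cdots}/\ket{ge\cdots}$ pattern by attaching an ancilla and applying an incoherent unitary rather than your dephasing-plus-permutation, and that for monotonicity the paper uses linearity of the superradiant functional directly (taking $\Omega(\rho)=\sum_k\Omega_k(K_k\rho K_k^\dagger)$ without a flag), so the flag register you borrow from Theorem~\ref{thm::distFish} is unnecessary here though harmless.
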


\begin{proof}
We first note that convexity is guaranteed by the convex roof construction, so we just need to prove the faithfulness property and the strong monotonicity property for $\mathcal{C}_{S}$ to be a valid coherence measure.

To prove faithfulness, it is sufficient to demonstrate that it is valid for pure states. It is easy to verify that if some state $\ket{\psi}$ is incoherent, then $\mathrm{Tr} [ \ket{\psi}\bra{\psi}D^{(i)}_+D^{(j)}_- ] = 0 $ for every $i\neq j$. Resorting to an incoherent operation will not help, since that will only lead to a mixture of incoherent pure states, and since the superradiant quantity is a linear functional, this implies that $\mathcal{C}_{S}(\rho)=0$ when $\rho$ is incoherent.

In order to prove that $\mathcal{C}_{S}(\rho)>0$ when $\rho$ is coherent, again, we only need to prove that it is true for pure states due to the convex roof construction. Suppose $\ket{\psi}$ is some coherent state. Then we are guaranteed that $\ket{\psi}= a_0 \ket{0} + a_1 \ket{1} + \ldots $. Without any loss in generality, we will assume that $a_0 \geq a_1 > 0$ and that $\ket{0}$ and $\ket{1}$ corresponds to the ground and excited states respectively. We note here that all the coefficients may be made positive via an incoherent unitary operation. We then perform the following incoherent transformation of state: $$U\ket{\psi}\ket{0} = a_0 \ket{0}\ket{1} + a_1 \ket{1}\ket{0} + \ldots$$ where $U$ is an incoherent unitary operation. We can then directly verify that $\mathrm{Tr} [ \ket{\psi}\bra{\psi}D^{(1)}_+D^{(2)}_- ] = a_0a_1 >0$ so there always exists one incoherent operation that achieves non-zero superradiant quantity for any coherent state $\ket{\psi}$. This implies that $\mathcal{C}_{S}(\rho)>0$ when $\rho$ is coherent, which demonstrates that $\mathcal{C}_{S}(\rho)=0$ iff $\rho$ is incoherent, and proves the faithfulness property.

We now prove strong monotonicity. Here, we also only need to prove it for pure states and the convex roof construction implies it is also true in general (See Appendix). Let the incoherent operation $\Phi$ be the operation with corresponding Kraus operators $K_i$ such that $\Phi(\rho) = \sum_i K_i \rho K_i^\dag$. We note that the map $\Omega(\rho) = \sum_i \Omega_i(K_i \rho K_i^\dag)$ is also a valid incoherent operation as long as $\Omega_i$ is also incoherent for every $i$. Let us assume that $\Omega_i$ is the optimal incoherent operation that achieves the maximal superradiant quantity for $\ket{\psi_i} = \frac{1}{\sqrt{\bra{\psi}K^\dag_iK_i\ket{\psi}}}K_i \ket{\psi}$. We then have the following chain of inequalities:

\begin{align*}
\mathcal{C}_{S}(\ket{\psi}) 
&= \max_{\Phi \in \mathrm{ICPTP}} \sum_{i \neq j}^N\mathrm{Tr} [ \Phi(\ket{\psi}\bra{\psi})D^{(i)}_+D^{(j)}_- ] \\
&\geq  \sum_{i \neq j}^N\mathrm{Tr} [ \Omega(\ket{\psi}\bra{\psi})D^{(i)}_+D^{(j)}_- ] \\
&=  \sum_{i \neq j}^N \sum_k \abs{\bra{\psi}K^\dag_kK_k\ket{\psi}} \mathrm{Tr} [ \Omega_k(\ket{\psi_k}\bra{\psi_k})D^{(i)}_+D^{(j)}_- ] \\
&=\sum_k \abs{\bra{\psi}K^\dag_kK_k\ket{\psi}} \mathcal{C}_{S}(\ket{\psi_k}),
\end{align*}
where the last line is simply the strong monotonicity condition expressed for a pure state. This completes the proof.
\end{proof}

Therefore, superradiant phenomena is also closely related to coherence. In fact, the case of superradiance is suggestive of a much larger class of quantities where any quantum advantage may be directly associated with coherence.

\begin{theorem} \label{thm::concavefun}

Let $f$ be some functional that maps a quantum state to the nonnegative portion of the real line. Then if $f(\ket{\psi})$, or more generally $\displaystyle \max_{\Phi \in \mathrm{ICPTP}}f[\Phi(\ket{\psi})]$, is strictly greater than zero iff $\ket{\psi}$ is coherent, and $f$ is concave or linear, then the convex roof construction $$C_{f}(\rho) \coloneqq \min_{\{p_i, \ket{\psi_i}\} }\max_{\Phi_i \in \mathrm{ICPTP}} \sum_i p_i f[\Phi_i(\ket{\psi_i})]$$ is a valid coherence measure. The minimization is over all pure state decompositions of the state $\rho = \sum_i p_i \ket{\psi_i}\bra{\psi_i}$.

\end{theorem}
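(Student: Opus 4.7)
The plan is to verify the three Baumgratz--Plenio--Winter axioms (C1)--(C3) for $C_f$, in direct analogy with the proof of Theorem~\ref{thm::supercoh}. Convexity (C3) is immediate from the convex roof construction, and (C2a) follows from strong monotonicity (C2b) combined with convexity, so the real work lies in establishing faithfulness (C1) and strong monotonicity on pure states; once the pure-state case of (C2b) is in hand, the extension to arbitrary mixed states is the standard convex-roof argument already invoked (and deferred to the Appendix) in the preceding proofs.

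For faithfulness, I would first observe that any incoherent $\rho$ admits the pure-state decomposition $\rho = \sum_i \rho_{ii}\ket{i}\bra{i}$ into incoherent pure states; by the hypothesis that $\max_{\Phi}f[\Phi(\ket{\psi})]>0$ exactly for coherent $\ket{\psi}$, every term contributes zero, so $C_f(\rho)=0$. Conversely, if $\rho$ is coherent then every pure-state decomposition must contain at least one coherent $\ket{\psi_i}$, since otherwise $\rho$ would be a convex mixture of incoherent pure states and hence diagonal; that coherent term contributes a strictly positive value while all others are non-negative, giving $C_f(\rho)>0$.

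For strong monotonicity on pure states, fix $\ket{\psi}$ and incoherent Kraus operators $\{K_j\}$ with $\sum_j K_j^\dagger K_j=\mathbb{1}$; set $p_j=\bra{\psi}K_j^\dagger K_j\ket{\psi}$ and $\ket{\psi_j}=K_j\ket{\psi}/\sqrt{p_j}$. For each $j$ let $\Omega_j$ be an optimal incoherent operation achieving $C_f(\ket{\psi_j})=f[\Omega_j(\ket{\psi_j}\bra{\psi_j})]$, which is well-defined because for pure states the convex roof collapses to the inner maximization. The concatenated channel $\Omega(\cdot)=\sum_j \Omega_j(K_j(\cdot)K_j^\dagger)$ is again incoherent and satisfies $\Omega(\ket{\psi}\bra{\psi})=\sum_j p_j\,\Omega_j(\ket{\psi_j}\bra{\psi_j})$. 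Applying concavity (or linearity) of $f$ then gives $f[\Omega(\ket{\psi}\bra{\psi})] \geq \sum_j p_j\,f[\Omega_j(\ket{\psi_j}\bra{\psi_j})]=\sum_j p_j\, C_f(\ket{\psi_j})$, and since $C_f(\ket{\psi})$ is defined as the maximum of $f[\Phi(\ket{\psi}\bra{\psi})]$ over incoherent $\Phi$, we conclude $C_f(\ket{\psi})\geq \sum_j p_j C_f(\ket{\psi_j})$, which is the required strong monotonicity on pure states.

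The main obstacle is really the hypothesis that $f$ be concave or linear: this is precisely what makes the concatenation trick produce the inequality in the correct direction after pulling $f$ inside the sum. For functionals that are instead convex in $\rho$ (QFI being the salient example) the same direct argument would go the wrong way, which is exactly why Theorem~\ref{thm::distFish} required the auxiliary classical-flag construction $\rho_i\mapsto \rho_i\otimes\ket{i}\bra{i}$ to effectively make QFI behave linearly on a suitable block-diagonal form. The present theorem cleanly covers the concave/linear case, with the QFI result handling the convex instance through that additional structure rather than falling out as a corollary.
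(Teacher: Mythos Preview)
Your proof is correct and follows essentially the same approach as the paper: the paper also reduces to strong monotonicity on pure states via the convex-roof construction, constructs the same concatenated incoherent map $\Omega(\cdot)=\sum_j \Omega_j(K_j(\cdot)K_j^\dagger)$ with each $\Omega_j$ optimal for the post-selected branch, and uses concavity/linearity of $f$ to pull the sum outside. The only difference is that the paper treats faithfulness as directly given by the hypothesis, whereas you spell out the passage from the pure-state hypothesis to the mixed-state statement; this is a harmless elaboration rather than a different route.
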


\begin{proof}
In order to prove that it is a valid measure, we only need to demonstrate that the above quantity satisfies the strong monotonicity condition for pure states. The faithfulness condition is assumed and convexity comes about naturally due the convex roof construction. 

The proof of strong monotonicity is only lightly modified for the proof of Theorem~\ref{thm::supercoh}. Let the incoherent operation $\Phi$ be the operation with corresponding Kraus operators $K_i$ such that $\Phi(\rho) = \sum_i K_i \rho K_i^\dag$. We note that the map $\Omega(\rho) = \sum_i \Omega_i(K_i \rho K_i^\dag)$ is also a valid incoherent operation as long as $\Omega_i$ is also incoherent for every $i$. Let us assume that $\Omega_i$ is the optimal incoherent operation that achieves the maximal value of $f$ for $\ket{\psi_i} = \frac{1}{\sqrt{\bra{\psi}K^\dag_iK_i\ket{\psi}}}K_i \ket{\psi}$. We then have the following chain of inequalities:

\begin{align*}
\mathcal{C}_{f}(\ket{\psi}) 
&= \max_{\Phi \in \mathrm{ICPTP}} f[\Phi(\ket{\psi}\bra{\psi}]  \\
&\geq  f [\Omega(\ket{\psi}\bra{\psi}]  \\
&=  f\left[ \sum_k \abs{\bra{\psi}K^\dag_kK_k\ket{\psi}} \Omega_k(\ket{\psi_k}\bra{\psi_k})\right] \\
&\geq\sum_k \abs{\bra{\psi}K^\dag_kK_k\ket{\psi}} f[\Omega_k(\ket{\psi_k}\bra{\psi_k})] \\
&=\sum_k \abs{\bra{\psi}K^\dag_kK_k\ket{\psi}} \mathcal{C}_{f}(\ket{\psi_k}).
\end{align*}

The first inequality is due to the maximization over all incoherent operations, of which $\Omega$ is simply one possible candidate, and the second inequality is the to the concavity or linearity of $f$. The last line is simply the expression of strong monotonicity for a pure state. The generalization of strong monotonicity to mixed states is then a natural consequence of the convex roof construction (See Appendix).
\end{proof}

\begin{figure}
	\centering
   	\includegraphics[width=1\linewidth]{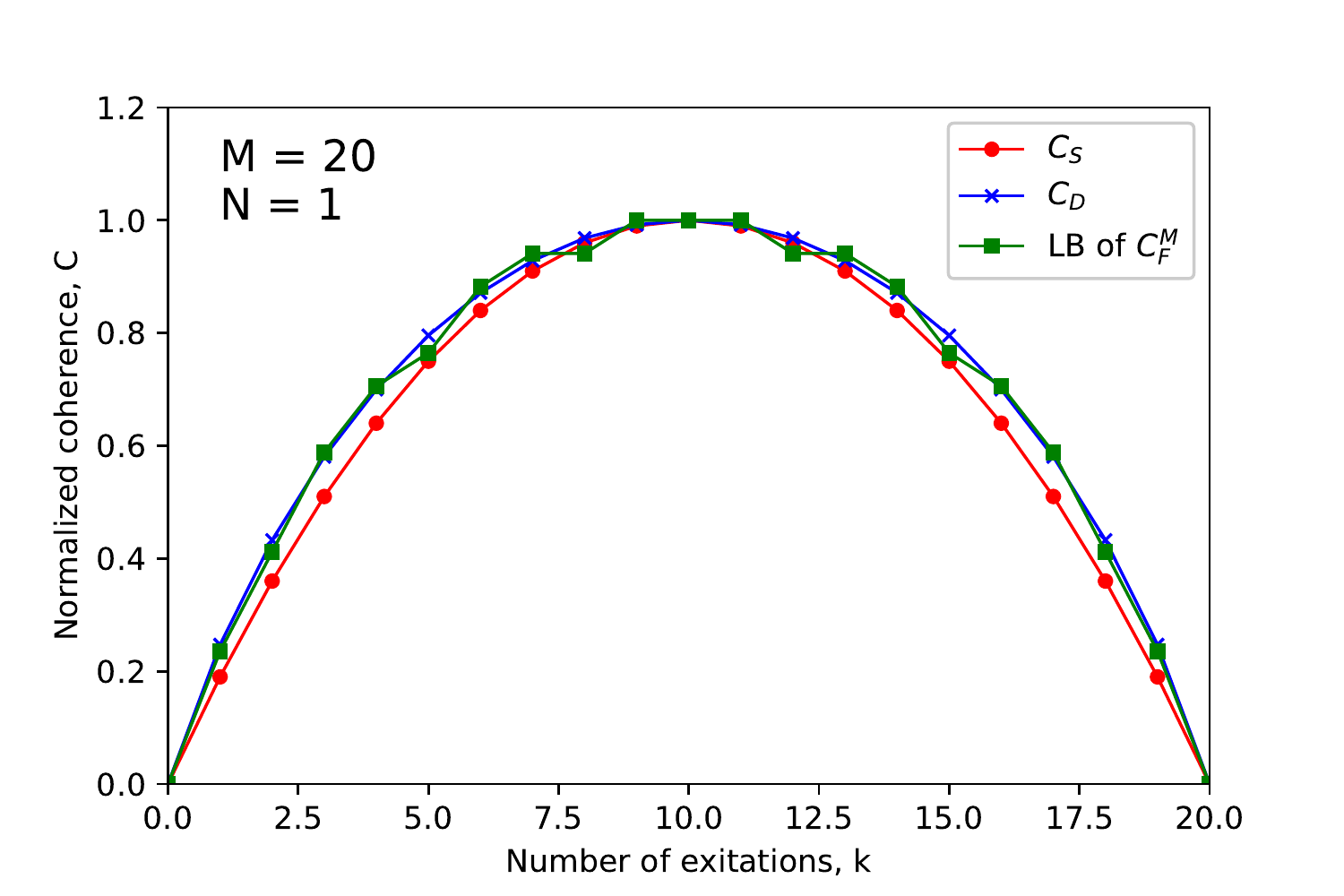}
    \caption{Comparisons between the lower bounds of the superradiant based measure $\mathcal{C}_S$(red, $\circ$), the QFI based measure  $\mathcal{C}_F^M$(green, $\square$), and the relative entropy of coherence $\mathcal{C}_R$(blue, $\times$) for the $M$-qubit Dicke state with $k$ excitations $| M, k\rangle$. The plots are normalized so they coincide at $k=10$, and demonstrate qualititatively similar behaviour across different measures. }
    \label{comparison}
\end{figure}

\section{Examples}

We provide numerical examples comparing our coherence measures $\mathcal{C}_F^M$ and $\mathcal{C}_S$ with the relative entropy of coherence $\mathcal{C}_R$ \cite{Baumgratz2014}. $\mathcal{C}_R$ is defined by $\mathcal{C}_R(\rho) = S(\rho_{\rm diag}) - S(\rho)$ where $S(\rho) = -{\rm Tr}\rho \log \rho$ is the von Neumann entropy of a density matrix $\rho$ and $\rho_{\rm diag}$ is the diagonal part of the density matrix ${\rho}$. While $\mathcal{C}_F^M$ and $\mathcal{C}_S$ requires an optimization over all incoherent operations, and the optimal solution for a general quantum state is in general unknown, any incoherent procedure will provide a lower bound for the measure.

We consider the set of $m$-qubit Dicke state with $k$ excitations. The state is given by
$$ |m, k \rangle = {\binom{m}{k}}^{-\frac{1}{2}} \sum_{P} |P(\underbrace{0 \ldots 0}_\textit{m-k} \underbrace{1 \ldots 1}_\textit{k} )\rangle$$ where $P$ refers to a particular permutation of the state and the summation is over all possible permutations. By verifying the majorization condition \cite{Du2015} for two uniformly distributed pure states, we can show that there exist incoherent operations $\Phi$ which performs the map
$|m, k \rangle \buildrel{\Phi}\over{\longrightarrow} (|0\rangle + |1\rangle)^{\otimes m_k}$ where $m_k$ is the largest integer satisfying $2^{m_k} \leq \binom{m}{k}$. We will use this fact to compute a lower bound for $\mathcal{C}^M_F$ where $M=m$. For the superradiance based measure, we will simply apply the identity operation. Figure~\ref{comparison} illustrates the similarity of the obtained lower bounds to the relative entropy of coherence. Note that the computed values are normalized such that they coincide at the point $k=10$.

\section{QFI and superradiance as opposing resources} 

In the previous sections, we have considered the QFI and superradiance as separate phenomena. In particular, we have considered the optimization of each individual resource w.r.t. some optimal incoherent operations. Here, instead of optimizing for each resource separately, we consider the joint optimization of both quantities. Interestingly, the total sum of distributed QFI and the distributed form of the supperradiant quantity is also itself a coherence measure. 

We first define the coherence of Superradiant-QFI:

\begin{definition} [Coherence of Superradiant-QFI] \label{def::SFcoh}

For pure states, the Coherence of Superradiant-QFI is defined as

\begin{align*}
\mathcal{C}_{SF}(\ket{\psi}) \coloneqq \max_{\Phi \in \mathrm{ICPTP}}   \sum_{i=1}^M  \bigg[ \frac{1}{4} \mathcal{F}\{ \mathrm{Tr}_{P(i)^c}[\Phi(\ket{\psi}\bra{\psi})], H_S^{(i)}\}  \\+ \sum_{k \neq l}^N\mathrm{Tr} [ \Phi(\ket{\psi}\bra{\psi})D^{(i,k)}_+D^{(i,l)}_- ] \bigg],
\end{align*}
where $H_S^{(i)}$ is the $i$th local Hamiltonian of the form $H_S^{(i)} \coloneqq \sum_{k=1}^N h^{(i,k)}$ and $h^{(i,k)}$ are nontrivial. $P(i)$ refers to the $i$th partition of particles in the state $\Phi(\ket{\psi}\bra{\psi})$ which is partitioned into $M$ collections of particles that separately interacts with the Hamiltonians $H^{(i)}_S$. The partial trace $\mathrm{Tr}_{ P(i)^c}$ is to be interpreted as tracing out every particle except the ones in $P(i)$. The operators $D^{(i,k)}_+$ and $D^{(i,k)}_-$ are the raising and lowering operators acting on the $k$th particle in the $i$th partition.  

The extension to mixed states is performed via the convex roof construction $$\mathcal{C}_{SF}(\rho) \coloneqq \min_{\{p_i, \ket{\psi_i}\} } \sum_i p_i \mathcal{C}_{SF}(\ket{\psi_i}).$$ The minimization is over all pure state decompositions of the state $\rho = \sum_i p_i \ket{\psi_i}\bra{\psi_i}$.
\end{definition}

The above is simply the sum of the distributed QFI and the superradiant quantity, except now they are jointly optimized over incoherent operations. As before, the construction is based on considering the optimization for pure states, and the generalization to mixed states is achieved via the convex roof constuction. Assuming that the operators $H_S^{(i)}$ and $D_\pm^{i,k}$ specify the same incoherent basis, we can prove that the joint quantity $\mathcal{C}_{SF}$ is also a valid coherence measure:

\begin{theorem} \label{thm::SFcoh}
$\mathcal{C}_{SF}$ is a valid coherence measure.
\end{theorem}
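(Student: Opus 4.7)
The plan is to follow the templates of the proofs of Theorems~\ref{thm::distFish} and~\ref{thm::supercoh} essentially line by line, since $\mathcal{C}_{SF}$ is built by summing (with a factor $1/4$ on the QFI piece) the two ingredients whose individual monotonicities we have already controlled. Convexity comes for free from the convex-roof construction, and the reduction of mixed-state strong monotonicity to the pure-state case is again handled by the convex roof (see Appendix), so the two items that will need genuine work are faithfulness and strong monotonicity on pure states.

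For faithfulness, if $\rho$ is incoherent then every incoherent $\Phi$ keeps $\Phi(\rho)$ diagonal in the incoherent basis shared by $\{H_S^{(i)}\}$ and $\{D^{(i,k)}_\pm\}$: the QFI vanishes from the spectral formula because each $H_S^{(i)}$ has no off-diagonal elements in that basis, while $\mathrm{Tr}[\Phi(\rho) D^{(i,k)}_+ D^{(i,l)}_-]=0$ for $k\neq l$ because the operator is off-diagonal and the state is diagonal, so $\mathcal{C}_{SF}(\rho)=0$. For a coherent pure $\ket{\psi}$ I will reuse the construction of Lemma~\ref{lem::purestates}: applied within one partition, the resulting GHZ-like state $\sum_i \sqrt{\lambda_i}\ket{i\cdots i}$ delivers a strictly positive $\mathcal{O}(N^2)$ QFI contribution to that partition, while a direct check shows $D^{(k)}_+ D^{(l)}_-\ket{i\cdots i}=0$ for $k\neq l$, so the superradiant summand contributes zero on that partition and the remaining partitions (with incoherent ancillas such as $\ket{g}^{\otimes N}$) contribute zero to both pieces. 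The joint sum is therefore strictly positive, and for coherent mixed $\rho$ any pure-state decomposition must contain a coherent component, so the convex roof inherits strict positivity.

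For strong monotonicity on pure states, given Kraus operators $\{K_i\}$ of an arbitrary ICPTP map with outcomes $\ket{\psi_i}=K_i\ket{\psi}/\sqrt{p_i}$ and per-branch optimizers $\Phi_i$, I will assemble the global incoherent map $\Phi'(\rho)\coloneqq \sum_i \Phi_i(K_i\rho K_i^\dagger)\otimes \ket{i}\bra{i}$ that appends a classical register labelling the Kraus branch. The state $\Phi'(\ket{\psi}\bra{\psi})$ is block-diagonal in the ancilla, and this is exactly where the two summands cooperate: block additivity of QFI with respect to the ancilla-extended Hamiltonian $\sum_i H_S^{(k)}\otimes\ket{i}\bra{i}$ reproduces $\sum_i p_i \mathcal{F}(\mathrm{Tr}_{P(k)^c}[\Phi_i(\ket{\psi_i}\bra{\psi_i})],H_S^{(k)})$ exactly as in Theorem~\ref{thm::distFish}, while linearity of the expectation of $D^{(i,k')}_+ D^{(i,l')}_-\otimes I_{\text{anc}}$ across the block-diagonal state gives $\sum_i p_i \mathrm{Tr}[\Phi_i(\ket{\psi_i}\bra{\psi_i}) D^{(i,k')}_+ D^{(i,l')}_-]$ as in Theorem~\ref{thm::supercoh}. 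Recombining the two pieces with the correct $1/4$ weight reconstructs $\sum_i p_i \mathcal{C}_{SF}(\ket{\psi_i})$ exactly, and since $\Phi'$ is a single admissible incoherent operation, this sum is bounded above by $\mathcal{C}_{SF}(\ket{\psi})$, which is the required pure-state inequality.

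The main obstacle I anticipate is the bookkeeping around the ancillary register: I need to verify that $\Phi'$ is genuinely incoherent on the enlarged Hilbert space, that $H_S^{(k)}\otimes I_{\text{anc}}$ remains admissible in the sense of Definition~\ref{def::SFcoh} (with the ancilla treated as a passive register rather than one of the $N$ interacting particles), and that $\mathrm{Tr}_{P(k)^c}$ is understood to retain the ancilla so that the block structure survives for both summands simultaneously, while the raising and lowering operators are extended trivially on it. This delicacy was already implicit in the two earlier proofs; once it is handled consistently for the joint quantity, the chain of (in)equalities follows by running the two template arguments in parallel.
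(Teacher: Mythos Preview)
Your proposal is correct and follows essentially the same approach as the paper: convexity from the convex roof, reduction to pure states via the Appendix proposition, and strong monotonicity on pure states via the classical-register map $\Phi'(\rho)=\sum_i \Phi_i(K_i\rho K_i^\dagger)\otimes\ket{i}\bra{i}$, exploiting block additivity of the QFI together with linearity of the superradiant expectation. Your faithfulness argument is in fact slightly more explicit than the paper's---you exhibit a single incoherent operation (the GHZ-type map of Lemma~\ref{lem::purestates}) for which the QFI summand is strictly positive while the superradiant summand vanishes identically, whereas the paper simply invokes the separate faithfulness of $\mathcal{C}_S$ and $\mathcal{C}_F^M$---but the overall structure and the key technical step are the same.
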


\begin{proof}
We first note that $\mathcal{C}_{S}$ and $\mathcal{C}^M_F$ are both faithful measures, which implies that there always exists an incoherent operation that ensures $\mathcal{C}_{SF}(\ket{\psi})>0$ iff $\ket{\psi}$ is a coherent pure state. For any mixed coherent state $\rho$, there always exists at least one coherent pure state in its pure state decomposition, which is sufficient to prove that $\mathcal{C}_{SF}(\rho)=0$ iff $\rho$ is an incoherent state. 

Convexity is guaranteed by the convex roof construction, so we only have to prove strong monotonicity for pure states. For a proof of this fact, see Appendix.

To show this, let the incoherent operation $\Phi$ be the operation with corresponding Kraus operators $K_i$ such that $\Phi(\rho) = \sum_i K_i \rho K_i^\dag$. We note that the map $\Omega(\rho) = \sum_i \Omega_i(K_i \rho K_i^\dag) \otimes \ket{i}\bra{i}$ is also a valid incoherent operation as long as $\Omega_i$ is also incoherent for every $i$. Let us assume that $\Omega_i$ is the optimal incoherent operation that achieves the maximal summation of the distributed QFI and the superradiant quantity for $\ket{\psi_i} = \frac{1}{\sqrt{\bra{\psi}K^\dag_iK_i\ket{\psi}}}K_i \ket{\psi}$. We have the following chain of inequalities:

\begin{align}
\mathcal{C}_{SF}&(\ket{\psi})   \\
&\coloneqq\max_{\Phi \in \mathrm{ICPTP}}  \bigg\{  \frac{1}{4}\sum_{i=1}^M \mathcal{F}\{ \mathrm{Tr}_{P(i)^c}[\Phi(\ket{\psi}\bra{\psi})], H_S^{(i)}\}  \notag\\
&+ \sum_{i=1}^M \sum_{k \neq l}^N\mathrm{Tr} [ \Phi(\ket{\psi}\bra{\psi})D^{(i,k)}_+D^{(i,l)}_- ] \bigg\} \\
&\geq \frac{1}{4}\sum_{i=1}^M \mathcal{F}\{ \mathrm{Tr}_{P(i)^c}[\Omega(\ket{\psi}\bra{\psi})], H_S^{(i)}\}  \notag\\
&+ \sum_{i=1}^M \sum_{k \neq l}^N\mathrm{Tr} [ \Omega(\ket{\psi}\bra{\psi})D^{(i,k )}_+D^{(i,l)}_- ] \\
&= \frac{1}{4}\sum_{i=1}^M \mathcal{F}\bigg\{ \mathrm{Tr}_{P(i)^c}[ \notag\\
&\sum_{k'} p_{k'} \Omega_{k'}(\ket{\psi_{k'}}\bra{\psi_{k'}})]\otimes \ket{k'}\bra{k'} ,  H_S^{(i)} \notag\\
&\otimes \sum_{k'} \ket{k'}\bra{k'} \bigg\}  \notag\\
&+ \sum_{i=1}^M \sum_{k \neq l}^N \mathrm{Tr} \left[ \sum_{k'} p_{k'} \Omega_{k'}(\ket{\psi_{k'}}\bra{\psi_{k'}})D^{(i,k)}_+D^{(i,l)}_- \right] \\
&= \sum_{k'} p_{k'} \bigg \{ \frac{1}{4}\sum_{i=1}^M \mathcal{F}\{ \mathrm{Tr}_{P(i)^c}[\Omega_{k'}(\ket{\psi_{k'}}\bra{\psi_{k'}})] ,  H_S^{(i)}\}  \notag\\
&+ \sum_{i=1}^M \sum_{k \neq l}^N \mathrm{Tr} [ ( \Omega_{k'}(\ket{\psi_{k'}}\bra{\psi_{k'}})D^{(i,k)}_+D^{(i,l)}_- ] \bigg \} \\
&= \sum_{k'} p_{k'} \mathcal{C}_{SF}(\ket{\psi_{k'}})
\end{align}

The above largely follows the same themes as in in Theorem~\ref{thm::supercoh}. This demonstrates the strong monotonicity for a pure state, which also implies it is true for mixed states due to the convex roof construction.
\end{proof}

From the above, we see that the sum of the QFI and the superradiant quantity is bounded by the coherence. An immediate implication of this is that in a coherence limited scenario, optimizing one quantity may in turn imply less quantum resources that can be deployed for the other. 

The following theorem explicitly demonstrates this trade{\color{red}-}off between the superradiant quantity and  QFI for the case of of $N$ spin-$1/2$ systems.
\begin{theorem}[Trade-off relation between superradiance and QFI] \label{thm::tradeoff}
For any density matrix $\rho$ in $N$-particle spin-$1/2$ systems the following bound holds:
$$
\sum_{i \neq j} {\rm Tr} [ \rho D^{(i)}_+ D^{(j)}_-] + \frac{1}{4} {\cal F}(\rho, \sum_{i=1}^N \sigma^{(i)}_z) \leq \mu ( N - \mu ) \leq \frac{N^2}{4},
$$
where $\mu$ is the mean excitation number.
\end{theorem}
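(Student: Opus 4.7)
The plan is to reduce the inequality to pure states using convexity/concavity, recast both sides in the language of collective angular-momentum operators $J_\alpha = \tfrac12\sum_{i=1}^N\sigma_\alpha^{(i)}$, and finish with the Casimir bound $\langle J^2\rangle\le \tfrac{N}{2}(\tfrac{N}{2}+1)$ on $N$ spin-$\tfrac12$ subsystems. The outer inequality $\mu(N-\mu)\le N^2/4$ is just AM--GM applied to $x(N-x)$ with $x+(N-x)=N$.

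The reduction to pure states rests on the following observations. The superradiant sum $S(\rho)\coloneqq\sum_{i\neq j}\Tr[\rho D^{(i)}_+D^{(j)}_-]$ is linear in $\rho$, and $\mathcal{F}(\rho,H_S)$ is convex in $\rho$ (a standard consequence of the symmetric-logarithmic-derivative representation of the QFI), so the left-hand side is convex in $\rho$. Meanwhile $\mu(\rho)=\Tr[\rho\hat N_e]$ is linear in $\rho$ and $x\mapsto x(N-x)$ is concave, so the right-hand side is concave in $\rho$. For any pure-state decomposition $\rho=\sum_k p_k\ket{\psi_k}\bra{\psi_k}$ one therefore has $\mathrm{LHS}(\rho)\le\sum_k p_k\,\mathrm{LHS}(\ket{\psi_k})$ and $\mathrm{RHS}(\rho)\ge\sum_k p_k\,\mathrm{RHS}(\ket{\psi_k})$, so the pure-state bound propagates to all $\rho$.

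For pure states the superradiant term can be rewritten using $\sum_{i,j}D^{(i)}_+D^{(j)}_-=J_+J_-$ together with $J_+J_-=J_x^2+J_y^2+J_z$ and $\hat N_e=\tfrac{N}{2}+J_z$, yielding
\begin{equation*}
S(\ket{\psi})=\langle J_x^2\rangle+\langle J_y^2\rangle-\tfrac{N}{2}.
\end{equation*}
On the QFI side, purity collapses the Fisher information to a variance, $\tfrac14\mathcal{F}(\ket{\psi}\bra{\psi},H_S)=(\Delta H_S)^2$, which translates through $H_S=2J_z$ into an expression in $\langle J_z^2\rangle$ and $\langle J_z\rangle^2$. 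Using $\mu(N-\mu)=\tfrac{N^2}{4}-\langle J_z\rangle^2$, the target inequality should rearrange into the single statement $\langle J^2\rangle\le\tfrac{N}{2}(\tfrac{N}{2}+1)$, which is the Casimir bound: $(\mathbb{C}^2)^{\otimes N}$ decomposes into $SU(2)$ irreps of total spin at most $N/2$, on each of which $J^2=j(j+1)$.

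The main obstacle is the algebraic bookkeeping in this pure-state step. One must carefully track the factor of two relating $H_S=\sum_i\sigma_z^{(i)}$ and $J_z=\tfrac12\sum_i\sigma_z^{(i)}$ through the variance, and fix the sign convention in $\hat N_e=\tfrac{N}{2}\pm J_z$ consistently with the choice of ground/excited labels for $D_\pm$. Once the normalizations are lined up, the pure-state inequality becomes a one-line invocation of the Casimir bound, after which the convex/concave argument of the first step delivers the mixed-state result and AM--GM closes out the outer bound.
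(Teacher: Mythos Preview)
Your approach is correct and genuinely different from the paper's. The paper does not reduce to pure states; instead it block-decomposes $\rho$ into excitation-number sectors $\rho_m=\Pi_m\rho\Pi_m/p_m$ (using that each $D^{(i)}_+D^{(j)}_-$ preserves total excitation), bounds the superradiant term sector-by-sector by the Dicke-state maximum $m(N-m)$, rewrites $\sum_m p_m\,m(N-m)=\mu(N-\mu)-\mathrm{Var}(\hat N_e)$, and then invokes the general inequality $\mathrm{Var}(H_S)\ge\tfrac14\mathcal F(\rho,H_S)$, which holds for arbitrary mixed states. In contrast, your route uses convexity of the QFI (and linearity of $S$, concavity of $x\mapsto x(N-x)$) to push everything to pure states, then collects both terms into a single Casimir expression. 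The two arguments are essentially dual: the paper's ``Dicke state saturates $S$ at fixed $m$'' is precisely the Casimir bound restricted to a $J_z$-eigenspace, so the underlying inequality is the same; what differs is whether one handles mixedness via the variance--QFI inequality (paper) or via convexity (you). Your version is a bit cleaner conceptually because it names the Casimir bound explicitly rather than citing the Dicke maximum as a separate fact.

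One concrete point to fix: you write $H_S=2J_z$, but with that normalization the theorem is actually false (for the GHZ state one gets $\tfrac14\mathcal F=N^2>\mu(N-\mu)=N^2/4$). The paper's own proof uses $\sum_i\sigma_z^{(i)}=\sum_m(m-\tfrac N2)\Pi_m$, i.e.\ eigenvalues $\pm\tfrac12$ per site, so effectively $H_S=J_z$. With that convention your pure-state algebra closes exactly: $S+\tfrac14\mathcal F=\langle J_x^2+J_y^2\rangle-\tfrac N2+\langle J_z^2\rangle-\langle J_z\rangle^2=\langle J^2\rangle-\tfrac N2-\langle J_z\rangle^2$, while $\mu(N-\mu)=\tfrac{N^2}{4}-\langle J_z\rangle^2$, and the inequality reduces to $\langle J^2\rangle\le\tfrac N2(\tfrac N2+1)$ on the nose. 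So your flagged concern about the factor of two is exactly the right thing to worry about; just resolve it in favor of $H_S=J_z$.
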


\begin{proof}
We first note that each pair of $D^{(i)}_+ D^{(j)}_-$ commutes with $\sum_{i=1}^N \sigma^{(i)}_z$, thus it does not changes the excitation number.
This observation leads to the fact that coherence between different excitation number does not contribute to the superradiance.
Then we have ${\rm Tr} [\rho D^{(i)}_+ D^{(j)}_-] =  \sum_{m=0}^N p_m {\rm Tr} [\rho_m D^{(i)}_+ D^{(j)}_-]$, where
$\rho_m = \Pi_m \rho \Pi_m/{p_m}$ and $p_m = {\rm Tr} \rho \Pi_m$ given by the projector $\Pi_m$ onto the excitation number subspace given by $\sum_{i=1}^N \sigma^{(i)}_z = \sum_{m=0}^N \left( m -\frac{N}{2}\right) \Pi_m$.

Furthermore, with  $\sum_{i \neq j} {\rm Tr} [\rho_m D^{(i)}_+ D^{(j)}_-]$ achieves the maximum value $m (N- m)$ when $\rho_m$ is the pure Dicke state of $N$ particles and excitation number $m$. 
Then we get the following bound:
$$
\begin{aligned}
\sum_{i \neq j} {\rm Tr} [ \rho D^{(i)}_+ D^{(j)}_-] &= \sum_{i \neq j} \sum_{m=0}^N p_m {\rm Tr} [\rho_m D^{(i)}_+ D^{(j)}_-] \\
&\leq \sum_{m=0}^N p_m m  (N-m) \\
&= -{\rm Var}\left(\sum_{i=1}^N \sigma^{(i)}_z\right) - \mu^2 + N \mu \\
&\leq - \frac{1}{4} {\cal F}(\rho, \sum_{i=1}^N \sigma^{(i)}_z) + \mu ( N - \mu),
\end{aligned}
$$
where we used the fact that variance is always large than one-quarter of QFI and $\mu= \sum_{i=1}^N {\rm Tr} (m \rho \Pi_m )$ is the mean excitation number. Also note that $\mu ( N - \mu ) = N^2/4$ for $\mu=N/2$, which is the maximum possible value.
\end{proof}

The above inequality suggests that in the single party, $N$ spin scenario, when coherence is distributed such that the superradiant quantity is saturated, then this must come at the expense of QFI. 
Figure~\ref{F_vs_SR} shows that there is no quantum state which can achieve both maximal supperradiant quantity and QFI simultaneously.
This can be shown by considering two extremal cases: a Dicke state with $N/2$ excitation $\ket{N,\frac{N}{2}}$ and the GHZ-state $\ket{{\rm GHZ}} = \frac{1}{\sqrt{2}}(\ket{0}^{\otimes N}+ \ket{1}^{\otimes N})$.
Each state achieves the maximum superradiance $\sum_{i \neq j} {\rm Tr} [ \ket{N,\frac{N}{2}}\bra{N,\frac{N}{2}} D^{(i)}_+ D^{(j)}_- ] = N^2/4$ and the maximum  QFI ${\cal F}(\ket{\rm GHZ}\bra{\rm GHZ}, \sum_{i=1}^N \sigma^{(i)}_z) = N^2$, respectively. In either case, the state saturates the bound, leaving the other   resource to be zero. Every point on the boundary is reachable via an incoherent operations that maps the Dicke $\ket{N,\frac{N}{2}}$ to the state to the $N$ spin GHZ state $\ket{{\rm GHZ}}$ with probability $p$. Note that the total sum of both quantities, corresponding to $\mathcal{C}_{SF}$, is a constant over $0 \leq p \leq 1$. A related clock/work trade-off relation between coherence resources was recently studied within the context of quantum thermodynamics~\cite{KwonTO}. Superradiant phenomena was also previously studied in relation to the skew information~\cite{Santos2016}.

\begin{figure}
	\centering
   	\includegraphics[width=1\linewidth]{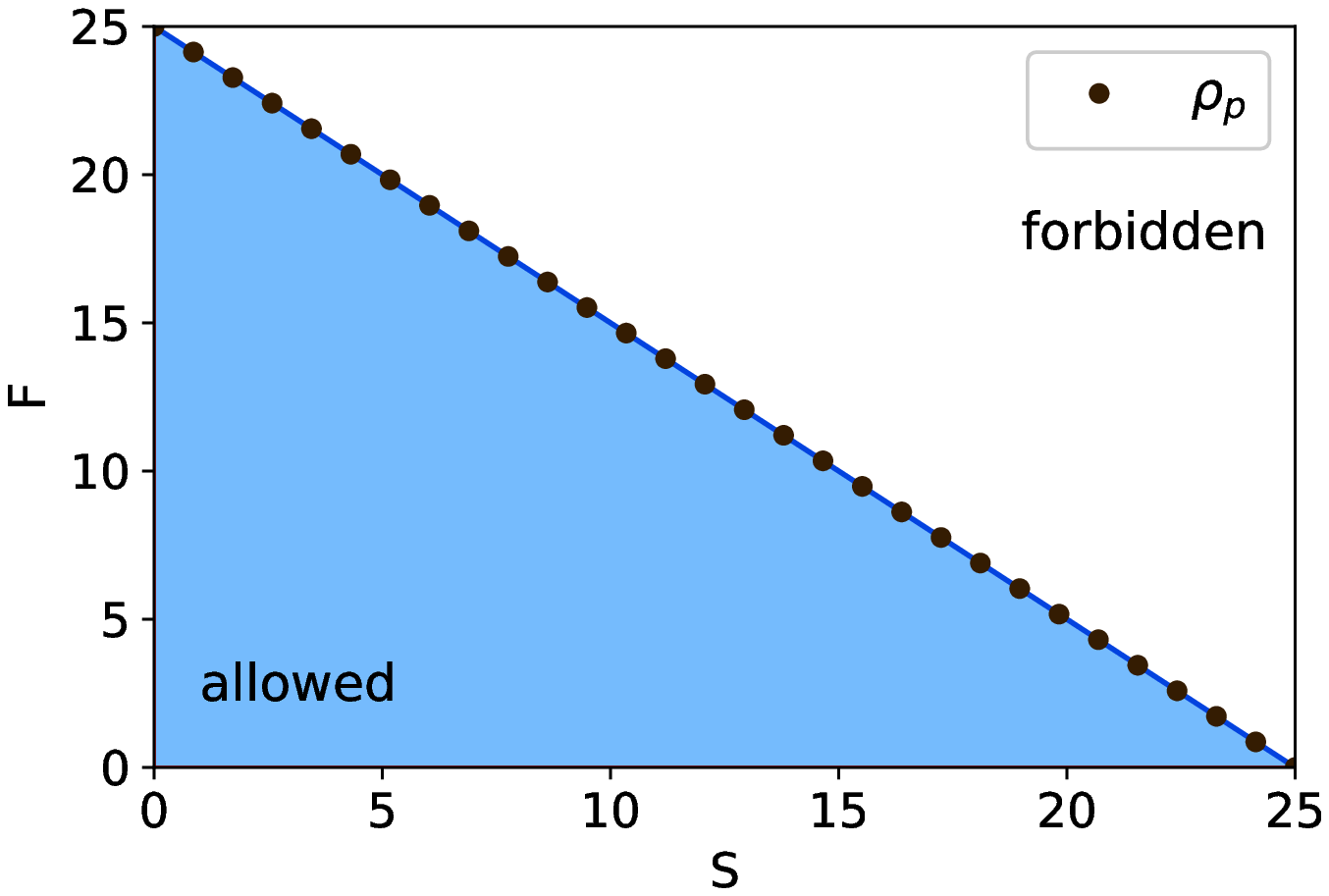}
    \caption{ Trade-off relation between the QFI $F = \frac{1}{4} {\cal F}(\rho, \sum_{i=1}^N \sigma^{(i)}_z)$ and the superradiant quantity $S = \sum_{i \neq j} {\rm Tr} [ \rho D^{(i)}_+ D^{(j)}_-]$ for $N=20$. The shaded region represents the allowed region of $(F, S)$. The boundary (See Inequality~\ref{thm::tradeoff}) can be saturated by a mixture of GHZ state and Dicke state $\rho_{p} = (1-p) \ket{\rm GHZ} \bra{\rm GHZ} + p \ket{N,\frac{N}{2}} \bra{N, \frac{N}{2}}$ (dotted line).
    The maximum achievable $F$ and $S$ are given by the GHZ state and Dicke state, respectively.
}
    \label{F_vs_SR}
\end{figure}

\section{Interconvertibility of resources} In the previous sections, we established strong connections between coherence,  QFI and superradiance. Previous work also established that coherence and entanglement are inconvertible resource via incoherent operations~\cite{Streltsov15}. The following theorem expands upon this by including QFI and superradiance.

\begin{theorem} \label{thm::interconvert}
There always exists an incoherent operation that maps nonzero coherence, QFI, superradiance and entanglement into one another.
\end{theorem}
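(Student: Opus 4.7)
My strategy is to use coherence as a hub and obtain the six pairwise interconversions by composition. For every ordered pair of distinct resources $(X,Y)$, I will show that any state with nonzero $X$ already has nonzero coherence (so the identity, which is trivially incoherent, realizes $X \to $ coherence), and that coherence can be converted to $Y$ by an explicit incoherent operation drawn from the earlier parts of the paper. Since the composition of two incoherent maps is incoherent, this yields $X \to Y$. The upshot is that the only real content is the three converse implications $X \to$ coherence (for $X\in\{$QFI, superradiance, entanglement$\}$) and the three forward implications coherence $\to Y$.

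\textbf{The converse half.} For QFI: if $\rho$ is incoherent then $\rho$ is diagonal in the eigenbasis of the local Hamiltonian $H_S$, so $[\rho,H_S]=0$ and $\mathcal{F}(\rho,H_S)=0$; contrapositively, nonzero QFI forces nonzero coherence. For superradiance: the operator $D^{(i)}_+D^{(j)}_-$ with $i\neq j$ has no diagonal entries in the product $\{\ket{e},\ket{g}\}$ basis, so $\mathrm{Tr}[\rho D^{(i)}_+D^{(j)}_-]=0$ for any diagonal (incoherent) $\rho$, exactly as already used in the proof of Theorem~\ref{thm::supercoh}. For entanglement: an incoherent state in the local product basis is a convex combination of product computational-basis states, hence separable; so an entangled state (with respect to the local basis on which the resource theory is phrased) is necessarily coherent.

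\textbf{The forward half.} Lemma~\ref{lem::purestates} already supplies an explicit incoherent operation mapping any coherent pure state $\ket{\psi}$ to a state with nonzero QFI (in fact achieving Heisenberg scaling), and the convex-roof construction in Theorem~\ref{thm::distFish} extends this to mixed coherent $\rho$ since any decomposition contains at least one coherent component. The proof of Theorem~\ref{thm::supercoh} gives the analogous incoherent unitary $U\ket{\psi}\ket{0}=a_0\ket{0}\ket{1}+a_1\ket{1}\ket{0}+\cdots$ producing nonzero superradiant quantity from any coherent state. For entanglement, I invoke the known result of~\cite{Streltsov15}: appending an incoherent ancilla and applying a CNOT-type incoherent unitary with the coherent system as control entangles the two registers whenever the input is coherent. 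Composing the trivial forward map of the converse half with any of these three ICPTP constructions yields the required interconversion.

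\textbf{Main obstacle.} The theorem is essentially a corollary of results already established in the paper, so the difficulty is mostly bookkeeping rather than new machinery. The one point that must be handled carefully is that the ``incoherent basis'' of the resource theory, the eigenbasis of the local Hamiltonian $H_S=\sum_i h^{(i)}$, the excited/ground-state basis of the superradiant model, and the local product basis used to define entanglement must all be identified; this is precisely the convention of local bases $\{\ket{a^{(i)}}\}$ adopted throughout the paper, so no additional work is needed. A secondary subtlety is ensuring the converse implications hold for mixed states, which is automatic because all three resource quantifiers are faithful on the relevant incoherent set (directly for QFI and the superradiant quantity, and via separability of diagonal states for entanglement).
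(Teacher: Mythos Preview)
Your approach is the same as the paper's: use coherence as a hub, note that nonzero QFI/superradiance/entanglement each force coherence (so the identity map handles that direction), and then exhibit explicit incoherent constructions from coherence to each of the three. The paper's proof follows exactly this template, citing Theorems~\ref{thm::distFish}, \ref{thm::supercoh} and Ref.~\cite{Streltsov15} for the same facts you invoke.

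There is one small but genuine gap in your forward half for mixed states. You write that ``the convex-roof construction in Theorem~\ref{thm::distFish} extends this to mixed coherent $\rho$,'' but that is not what the convex roof gives you: $\mathcal{C}_F^M(\rho)>0$ means $\min_{\{p_i,\ket{\psi_i}\}}\sum_i p_i\max_{\Phi_i}\mathcal{F}(\Phi_i(\ket{\psi_i}\bra{\psi_i}),H_S)>0$, with a potentially different $\Phi_i$ for each branch of each decomposition. It does \emph{not} hand you a single incoherent $\Phi$ with $\mathcal{F}(\Phi(\rho),H_S)>0$, which is what the theorem asks for. The same issue arises in your superradiance direction. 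The paper closes this gap by inserting a dimensional-reduction step you omit: it first maps the $N$-particle state to a single $d^N$-dimensional particle by an incoherent unitary, and then incoherently projects to a two-dimensional subspace on which the state still has a nonzero off-diagonal entry. On that coherent qubit (even if mixed), the identity already gives $\mathcal{F}(\rho,\sigma_z)>0$ since $[\rho,\sigma_z]\neq 0$, and a single CNOT with an ancilla gives nonzero superradiant quantity $2\,\mathrm{Re}\langle 0|\rho|1\rangle>0$ after an incoherent phase adjustment. With that reduction inserted, your argument and the paper's coincide.
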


\begin{proof}
 (QFI/superradiance/entanglement $\rightarrow$ single particle coherence): 
 
First of all, it is immediately apparent that if a state $\rho$ has nonzero QFI, superradiance and entanglement, then $\rho$ must be coherent (See Thms.~\ref{thm::monotonicity} and \ref{thm::supercoh} as well as Ref~\cite{Streltsov15}). Any state $\rho$ of $N$ particles each with dimension $d$, can always be mapped via an incoherent unitary to a single particle state of dimension $d^N$. We can see this already from the 2 qubit case. The basis $\{\ket{00}, \ket{01}, \ket{10}, \ket{11}\}$ can simply be mapped onto the basis $\{\ket{10}, \ket{20}, \ket{30}, \ket{40}\}$ via an incoherent unitary. Its extension to the $N$ particle, $d$ dimensional case is straightforward.

Therefore, it suffices to show that if one has access to a single $d$ dimensional particle that is coherent, then we can achieve nonzero QFI, superradiance and entanglement. 
Furthermore, it also suffices to show the case for a 2 dimensional qubit, since any $d$ dimensional system can be projected to a 2 dimensional qubit via an incoherent operation. This may decrease the coherence, but we simply need to show that there exists at least incoherent operation that converts coherence into QFI, superradiance and entanglement.

(Coherence $\rightarrow$ QFI): Let us consider a general coherent qubit state $\rho = a^2 \ket{1}\bra{1} + ab \ket{1}\bra{0} +ba \ket{0}\bra{1} +b^2 \ket{0}\bra{0}$. Without loss in generality, we can assume that $a,b>0$. If they are complex, we can always perform a unitary that is incoherent to remove the phase.
In this case, it is already immediately clear that $F(\rho, \sigma_z) > 0$, so the identity operation, a trivial incoherent operation, suffices to obtain nonzero  QFI.

(Coherence $\rightarrow$ superrradiance): 
For superradiance, we see that by performing a CNOT with an ancilla initialized in the incoherent state $\ket{1}$, we get \begin{align*} & U_{\mathrm{CNOT}} (\rho\otimes \ket{1}\bra{1}) U_{\mathrm{CNOT}}^\dag \\ & = a^2 \ket{1,0}\bra{1,0} + ab \ket{1,0}\bra{0,1} +ba \ket{01}\bra{10} +b^2 \ket{01}\bra{01}. \end{align*} 

The resulting superradiant quantity is then \begin{align*}\mathrm{Tr}\big[U_{\mathrm{CNOT}} (\rho\otimes \ket{1}\bra{1}) U_{\mathrm{CNOT}}^\dag (D_+^{(1)}D_-^{(2)} + & D_+^{(2)} D_-^{(1)} )\big ]  \\ & = 2ab>0, \end{align*} which is sufficient to show that there exists at least one incoherent operation that converts coherence to superradiance.

(Coherence $\rightarrow$ entanglement):  For entanglement, it is already previously considered in~\cite{Streltsov15}.

Therefore, all 4 quantum resources, coherence, QFI, superradiance and entanglement can be converted into one another using only incoherent operations.
\end{proof}

The above theorem therefore provides a bridge between QFI, superradiance and entanglement via coherence. Figure~\ref{fig::interconvert} illustrates the relationship between the various quantum resources.

It is known that entanglement is a necessary resource in order to achieve Heisenberg limited scaling in QFI~\cite{Pezze2009}. One may therefore be tempted to interpret the relationship between QFI and coherence in terms of the interconvertibility between coherence and entanglement. However, there are entangled states which cannot beat the shot noise limit, even when optimized over local operations~\cite{Hyllus2010}. In contrast, any pure state with nonzero coherence may achieve Heisenberg scaling via some appropriate incoherent operation via Lemma~\ref{lem::purestates}. Furthermore, our results show that any observation of nonzero QFI, even for single systems, is itself a witnessing of a quantum coherence effect, which cannot readily be explained by considering only entanglement.

Similar arguments also show that the Thm.~\ref{thm::supercoh} cannot be readily interpreted only via the conversion of coherence into entanglement. In fact, for the ideal Dicke model of superradiance, although there are many body quantum effects present, no entanglement is actually generated throughout the process~\cite{Wolfe2014}. Finally, the state which maximizes the superradiant quantity is the Dicke state, which yields zero metrological information for a local Hamiltonian of the form $H_S = \sum_i \sigma_z^{(i)}$. The evidence therefore suggests that quantum resources, as specified by QFI, superradiance and entanglement are qualitatively different from one another. They are all, however, connected via incoherent operations, and all possess an interpretation as a form of coherence, suggesting that coherence underlies their operational utility. This relationship is further strengthened in Thm.~\ref{thm::interconvert}, which showed that coherence, QFI, superradiance, and entanglement can be converted into one another via incoherent operations. The exact optimal incoherent operation that performs the conversion processes we leave as an open problem, but it may come in the form of some combination of CNOT type operations, such as the type we see in the faithful conversion of coherence to entanglement~\cite{Regula2017}.

\begin{figure}
	\centering
   	\includegraphics[width=1\linewidth]{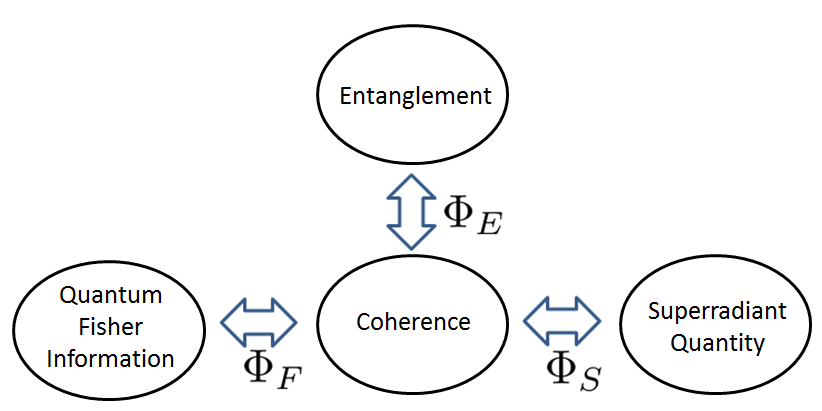}
    \caption{Quantum Fisher Information, superradiant quantity, and entanglement are qualitatively difference resources. However, they are all connected to coherence via incoherent operations. This implies one can move from one resource to another, but only if one considers things from the point of view of coherence. In the diagram $\Phi_F$, $\Phi_S$ and $\Phi_E$ represent incoherent operations which maximize the corresponding resources.}
    \label{fig::interconvert}
\end{figure}

\section{Conclusion}

In the preceding sections, we demonstrated how to construct, via QFI and the superradiant quantity, valid measures of quantum coherence. These measures quantify exactly how much Fisher information, superradiance, or their joint sum is extractable from the coherence in a pure quantum state. The generalization to mixed states may then be achieved via the convex roof construction. In this case, the measures are non-trivial upper bounds that establishes fundamental upper limits to the amount of utility that is extractable without injecting additional coherence into the quantum state. 

It was in fact already known that some form of coherence plays a crucial role in quantum metrology. In \cite{Marvian2016}, it was pointed out that \textit{unspeakable} coherence is especially relevant for metrology, and that the resource theory of asymmetry is able to quantify the metrological usefulness of a given probe state. Our results, in the form of Thms.~\ref{thm::distFish} and \ref{thm::interconvert} goes one step further, by demonstrating that more general forms of coherence may in fact be made useful via an appropriate incoherent operation. In contrast, an operation such as the type considered in Lemma 1 is explicitly forbidden in theories of asymmetry.

We then demonstrate general arguments that are also valid for a large class of functionals of quantum state that is faithful and concave~\ref{thm::concavefun}, demonstrating that a similar interpretation in terms of coherence also exists for such functionals. We also showed that a joint optimization of QFI and the superradiant quantity leads to a valid quantum coherence measure, thus proving an inherent trade{\color{red}-}off between the two processes that is limited by the coherence in the initial state. Finally, we showed that if one were to begin with a state with coherence, QFI, superradiance or entanglement, then there always exists an incoherent operation that converts one resource to another.  

We hope that our work will inspire further research into the role that coherence plays in QFI, Superradiance, quantum correlations, and yet other quantum phenomena.

\section*{Acknowledgments}
This work was supported by the National Research Foundation of Korea (NRF) through a grant funded by the Korea government (MSIP) (Grant No. 2010-0018295) and by the Korea Institute of Science and Technology Institutional Program (Project No. 2E27800-18-P043). K.C. Tan was supported by Korea Research Fellowship Program through the National Research Foundation of Korea (NRF) funded by the Ministry of Science and ICT (Grant No. 2016H1D3A1938100). S. C. was supported by the Global PhD Fellowship Program through the NRF funded by the Ministry of Education (NRF-2016H1A2A1908381).

\appendix
\section{A metrological interpretation of the trivial coherence measure}
In the main text, it was mentioned in passing that the simplest possible coherence measure has a metrological interpretation in terms of Lemma 1. Here, we devote a short discussion illustrating why this is the case.

\begin{definition*} [Asymptotic coherence of QFI]
The Asymptotic coherence of QFI is defined to be the quantity: $$\mathcal{C}_{AF}(\rho) \coloneqq  \max \left (\lim_{N\rightarrow \infty} \frac{\log\mathcal{C}^1_F(\rho)}{\log N}-1, 0\right ). $$ where $N$ is the number of terms in the signal Hamiltonian $H_S = \sum_{i=1}^N h^{(i)}$ and $h^{(i)}$ is nontrivial.
\end{definition*}

From this definition, we can show the following:

\begin{theorem*} \label{thm::metcoh}
$\mathcal{C}_{AF}$ is the trivial coherence measure where $\mathcal{C}_{AF}(\rho)= 1$ iff $\rho$ is coherent, and $\mathcal{C}_{AF}(\rho)= 0$ otherwise.
\end{theorem*}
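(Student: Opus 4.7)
The plan is to unfold the definition and analyze the scaling of $\mathcal{C}^1_F(\rho)$ with $N$. If $\rho$ is incoherent, Theorem~\ref{thm::distFish} applied with $M=1$ gives $\mathcal{C}^1_F(\rho)=0$ for every $N$, so $\log\mathcal{C}^1_F(\rho) = -\infty$, the limit inside the outer maximum is $-\infty$, and $\mathcal{C}_{AF}(\rho) = 0$. For coherent $\rho$ the task reduces to establishing $\mathcal{C}^1_F(\rho) = \Theta(N^2)$, which makes $\log\mathcal{C}^1_F(\rho)/\log N \to 2$ and yields $\mathcal{C}_{AF}(\rho) = \max(2-1,0) = 1$.

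The upper bound $\mathcal{C}^1_F(\rho) \leq O(N^2)$ is immediate from $\mathcal{F}(\sigma, H_S) \leq [\lambda_{\max}(H_S) - \lambda_{\min}(H_S)]^2 \leq (N\Lambda)^2$, where $\Lambda := \max_i[\lambda_{\max}(h^{(i)}) - \lambda_{\min}(h^{(i)})]$ is $N$-independent by hypothesis; this bound carries through the convex-roof minimization unchanged. For the lower bound I would first sidestep the convex roof by establishing $\mathcal{C}^1_F(\rho) \geq \max_{\Phi \in \mathrm{ICPTP}} \mathcal{F}(\Phi(\rho), H_S)$. Given any decomposition $\rho = \sum_i p_i \ket{\psi_i}\bra{\psi_i}$ and any fixed incoherent $\Phi$, the inequality $\mathcal{C}^1_F(\ket{\psi_i}) \geq \mathcal{F}(\Phi(\ket{\psi_i}\bra{\psi_i}), H_S)$ holds because $\Phi$ is one admissible candidate in the maximization defining $\mathcal{C}^1_F(\ket{\psi_i})$; taking the convex combination and invoking convexity of the QFI in the state yields $\sum_i p_i \mathcal{C}^1_F(\ket{\psi_i}) \geq \mathcal{F}(\Phi(\rho), H_S)$, and minimizing over decompositions then maximizing over $\Phi$ gives the claimed bound.

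With this bypass in hand, it suffices to exhibit a single incoherent $\Phi$ for which $\mathcal{F}(\Phi(\rho), H_S)$ scales as $N^2$, and I would reuse the CNOT-chain construction of Lemma~\ref{lem::purestates}. After an incoherent basis permutation sending a nonzero off-diagonal entry of $\rho$ into the positions corresponding to $\ket{i^{(1)}_{\max}}$ and $\ket{i^{(1)}_{\min}}$, appending ancillas in $\ket{0}$ and cascading CNOTs produces the state $\sigma = \sum_{kl} \rho_{kl}\ket{k\ldots k}\bra{l\ldots l}$ supported in the invariant subspace $V = \mathrm{span}\{\ket{k\ldots k}\}$, on which $H_S$ acts diagonally with eigenvalues $E_k = \sum_j \lambda_k^{(j)}$. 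Since the principal gap $E_1 - E_2 = \sum_j \phi_j \geq N\phi_{\min}$ scales linearly in $N$ while the gaps between the rescaled eigenvalues $E_k/N$ remain bounded away from zero uniformly in $N$, a direct expansion of the QFI formula in the eigenbasis of $\sigma$ yields $\mathcal{F}(\sigma, H_S) \geq c(\rho) N^2$ with $c(\rho) > 0$ independent of $N$, closing the gap with the upper bound and forcing $\mathcal{C}_{AF}(\rho) = 1$. The main obstacle is precisely this last step: a naive term-by-term invocation of Lemma~\ref{lem::purestates} inside the convex roof does not obviously preserve the $N^2$ coefficient under the minimization over decompositions, so the bypass inequality above — whose derivation rests essentially on convexity of the QFI — is the indispensable technical move.
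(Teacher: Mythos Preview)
Your proof is correct, and in fact more careful than the paper's own argument on the one point that matters. The paper's proof of the coherent case is exactly the ``naive term-by-term invocation'' you warn against: it argues that every pure-state decomposition of a coherent $\rho$ contains at least one coherent $\ket{\psi_i}$, invokes Lemma~\ref{lem::purestates} on that single term, and concludes $\mathcal{C}^1_F(\rho)\sim\mathcal{O}(N^2)$ without further justification. As you point out, this does not by itself control the infimum over decompositions, since the product $p_i\,c_i$ of the weight and the $N^2$-coefficient could in principle be driven to zero. Your bypass via the convexity of the QFI---fixing a single incoherent $\Phi$, bounding each $\mathcal{C}^1_F(\ket{\psi_i})\geq\mathcal{F}(\Phi(\ket{\psi_i}\bra{\psi_i}),H_S)$, and then using $\sum_i p_i\,\mathcal{F}(\Phi(\ket{\psi_i}\bra{\psi_i}),H_S)\geq\mathcal{F}(\Phi(\rho),H_S)$---cleanly removes this difficulty and yields a decomposition-independent lower bound. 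This is a genuine technical improvement over the paper.

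One small point of precision: your phrase ``the gaps between the rescaled eigenvalues $E_k/N$ remain bounded away from zero uniformly in $N$'' is stronger than what you actually need or can guarantee. What holds is that the single gap $(E_1-E_2)/N\geq\phi_{\min}$ is bounded below, and that the permutation has placed a nonzero off-diagonal entry $\rho_{12}$ connecting precisely these two levels; the remaining $E_k$ may cluster arbitrarily. This is enough: writing $\mathcal{F}(\sigma,H_S)=N^2\,\mathcal{F}(\rho,\tilde H/N)$ with $\tilde H=\mathrm{diag}(E_k)$, the state $\rho$ fails to commute with $\tilde H/N$ (because $\rho_{12}\neq 0$ bridges eigenspaces separated by at least $\phi_{\min}$), so $\mathcal{F}(\rho,\tilde H/N)$ is bounded below by a positive constant depending only on $\rho$ and $\phi_{\min}$. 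With that clarification, your argument goes through.
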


\begin{proof}
First, we show that that $\rho$ is incoherent iff $\mathcal{C}_{AF}(\rho) = 0$. This is immediately true since no incoherent state can reach the Heisenberg limit even with the help of incoherent operations, so if $\rho$ is incoherent, $\mathcal{C}_{AF}(\rho)=0$. On the other hand, from Lemma~\ref{lem::purestates}, we see that every pure state can reach the Heisenberg limit which implies that if $\rho$ is coherent, $\mathcal{C}_N(\rho) \sim \mathcal{O}(N^2)$ since the pure state decomposition of $\rho$ must contain at least one coherent pure state. This implies that in the limit $N \rightarrow \infty$, $\frac{\log\mathcal{C}_F(\rho)}{\log N} \rightarrow 2$, so  $\mathcal{C}_{AF}(\rho) = 1 >0$. This proves that $\rho$ is incoherent iff $\mathcal{C}_{AF}(\rho) = 0$. Here, we recall that $N$ corresponds to the number of particles that interact with the signal Hamiltonian $H_S$, and not the actual number of physical particles, which can be any arbitrary number so long as it is reachable via some incoherent operation.

We also immediately see that this is just the trivial coherence measure that assigns a value of 1 if a state $\rho$ is coherent, and assigns the value 0 otherwise. It is then easy to verify that the trivial coherence measure satisfies convexity and strong monotonicity, and so is indeed a valid coherence measure in the strict sense. This completes the proof.
\end{proof}

The above therefore provides one physical interpretation for the trivial coherence measure via $\mathcal{C}_{AF}$. We see that by considering the limiting case of $\mathcal{C}^1_F$, the trivial measure corresponds to the fact that every coherent pure state can always achieve Heisenberg limited scaling by applying an appropriate incoherent operation, while for incoherent states the Heisenberg limit is always inaccessible even with the help of arbitrary incoherent operations.

\section{Generalization of strong monotonicity from pure states to mixed states}

A frequently used result in the main paper is the assumption that one only needs to prove the strong monotonicity property for pure states, and the convex roof construction will ensure of its generalization to mixed states. Below, we provide an independent proof of this fact.

\begin{proposition}
Suppose $\mathcal{C}$ is a valid coherence measure over pure states. If we consider the convex roof construction of $\mathcal{C}$

$${\cal C}_{\rm conv.}(\rho) = \min_{ \{ p_i, \ket{\psi_i} \}} \sum_i p_i {\cal C}(\ket{\psi_i}),$$
then ${\cal C}_{\rm conv.}$ is convex, and the strong monotonicity for pure states $ \sum_j q_j {\cal C}(K_j \ket{\psi} / \sqrt{q_j}) \leq {\cal C}(\ket{\psi})$ with $q_j = \bra{\psi} K_j^\dagger K_j \ket{\psi}$ implies the strong monotonicity of ${\cal C}_{\rm conv.}$, i.e. $\sum_j q_j {\cal C}_{\rm conv.}(K_j \rho K_j^\dagger / q_j) \leq {\cal C}_{\rm conv.}(\rho)$ with $q_j = {\rm Tr} (\rho K_j^\dagger K_j)$.
\end{proposition}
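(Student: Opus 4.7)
My plan is to handle convexity first as a warm-up, then derive strong monotonicity by applying the given Kraus operators to an optimal pure-state decomposition of $\rho$ and showing that the resulting (non-normalized) ensemble yields a valid pure-state decomposition of each post-measurement state $\rho_j = K_j \rho K_j^\dagger / q_j$. Convexity is standard: given two states $\rho_A, \rho_B$ with optimal decompositions $\{p_i^A, \ket{\psi_i^A}\}$ and $\{p_i^B, \ket{\psi_i^B}\}$, the merged ensemble $\{\lambda p_i^A, \ket{\psi_i^A}\} \cup \{(1-\lambda) p_i^B, \ket{\psi_i^B}\}$ is a (not necessarily optimal) pure-state decomposition of $\lambda \rho_A + (1-\lambda)\rho_B$, so $\mathcal{C}_{\mathrm{conv.}}(\lambda \rho_A + (1-\lambda)\rho_B)$ is upper bounded by $\lambda \mathcal{C}_{\mathrm{conv.}}(\rho_A) + (1-\lambda) \mathcal{C}_{\mathrm{conv.}}(\rho_B)$.

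For strong monotonicity, let $\{p_i, \ket{\psi_i}\}$ be an optimal decomposition of $\rho$ so that $\mathcal{C}_{\mathrm{conv.}}(\rho) = \sum_i p_i \mathcal{C}(\ket{\psi_i})$. Define $p_{ij} \coloneqq \bra{\psi_i} K_j^\dagger K_j \ket{\psi_i}$, so that $q_j = \sum_i p_i p_{ij}$. A direct calculation gives the identity
\begin{equation*}
\rho_j = \frac{K_j \rho K_j^\dagger}{q_j} = \sum_i \frac{p_i p_{ij}}{q_j}\, \frac{K_j \ket{\psi_i}\bra{\psi_i} K_j^\dagger}{p_{ij}},
\end{equation*}
which is a valid pure-state decomposition of $\rho_j$ (with probabilities summing to one because $\sum_i p_i p_{ij} = q_j$). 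Since $\mathcal{C}_{\mathrm{conv.}}$ is defined as the minimum over such decompositions, we get the bound $\mathcal{C}_{\mathrm{conv.}}(\rho_j) \leq \sum_i (p_i p_{ij}/q_j)\, \mathcal{C}(K_j \ket{\psi_i}/\sqrt{p_{ij}})$.

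Multiplying by $q_j$, summing over $j$, and interchanging the order of summation yields
\begin{equation*}
\sum_j q_j\, \mathcal{C}_{\mathrm{conv.}}(\rho_j) \leq \sum_i p_i \sum_j p_{ij}\, \mathcal{C}\!\left(K_j \ket{\psi_i}/\sqrt{p_{ij}}\right).
\end{equation*}
Now the pure-state strong monotonicity hypothesis applied to each $\ket{\psi_i}$ gives $\sum_j p_{ij}\, \mathcal{C}(K_j \ket{\psi_i}/\sqrt{p_{ij}}) \leq \mathcal{C}(\ket{\psi_i})$. Substituting and using optimality of the chosen decomposition,
\begin{equation*}
\sum_j q_j\, \mathcal{C}_{\mathrm{conv.}}(\rho_j) \leq \sum_i p_i\, \mathcal{C}(\ket{\psi_i}) = \mathcal{C}_{\mathrm{conv.}}(\rho),
\end{equation*}
which is the desired strong monotonicity.

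The only subtle point—and where I expect the main bookkeeping to lie—is the identity relating the optimal decomposition of $\rho$ to a valid (not necessarily optimal) decomposition of $\rho_j$ with the correct normalization $p_i p_{ij}/q_j$; once this is set up, the inequality follows by a clean interchange of sums plus one invocation of the pure-state hypothesis. No compactness or minimizer-existence issue arises for finite-dimensional systems, where optimal decompositions are guaranteed by Carathéodory's theorem; in infinite dimensions one would instead pass to an $\epsilon$-optimal decomposition and let $\epsilon \to 0$.
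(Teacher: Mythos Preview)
Your proof is correct and follows essentially the same approach as the paper: both take an optimal pure-state decomposition of $\rho$, observe that pushing it through each Kraus operator $K_j$ yields a valid (not necessarily optimal) decomposition of $\rho_j$, and then combine the resulting bound with the pure-state strong monotonicity hypothesis. The only cosmetic difference is the order of the two inequalities and your added remark on existence of minimizers.
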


\begin{proof}
Convexity: Suppose $\rho = \sum_i p_i \rho_i$ and the optimal decomposition of $\rho_i$ is $\{ q^i_j, \ket{\psi^i_j} \}$ that ${\cal C}_{\rm conv.}(\rho_i) = \sum_j q^i_j {\cal C}(\ket{\psi^i_j})$. Note that $\rho = \sum_{i,j} p_i  q^i_j \ket{\psi^i_j}\bra{\psi^i_j}$ is one of the possible decompositions of $\rho$, thus ${\cal C}_{\rm conv.}(\rho) = {\cal C}_{\rm conv.}(\sum_i p_i \rho_i) \leq \sum_{i,j}  p_i  q^i_j {\cal C}(\ket{\psi^i_j}) = \sum_i p_i {\cal C}_{\rm conv.}(\rho_i) $.

Strong monotonicity: Suppose the optimal decomposition of $\rho$ is $\{ p_\mu^* , \ket{\psi_\mu^*} \}$.
Then $\rho_j = \sum_\mu (p_\mu^* / q^\mu_j) K_j \ket{\psi_\mu^*} \bra{\psi_\mu^*} K_j^\dagger$, where $q^\mu_j = \bra{\psi_\mu^*} K_j^\dagger K_j \ket{\psi_\mu^*}$.
Note that ${\cal C}(\ket{\psi_\mu^*}) \geq \sum_j q^\mu_j {\cal C}(K_j \ket{\psi_\mu^*} / \sqrt{q^\mu_j})$ for all $\mu$. Thus,
$$
\begin{aligned}
{\cal C}_{\rm conv.}(\rho) &= \sum_\mu p_\mu^* {\cal C}(\ket{\psi_\mu^*}) \\
&\geq \sum_{\mu,j} p_\mu^* q^\mu_j {\cal C}\left( \frac{K_j \ket{\psi_\mu^*}}{\sqrt{q^\mu_j}} \right) \\
&\geq  \sum_j q_j {\cal C}_{\rm conv.} (\rho_j),
\end{aligned}
$$
where the last inequality comes from that $\rho_j = \sum_\mu (p_\mu^* q^\mu_j / q_j) \left( \frac{K_j \ket{\psi_\mu^*} \bra{\psi_\mu^*} K_j^\dagger}{q^\mu_j} \right)$ is just one of the possible pure state decompositions of $\rho_j$ and $q_j = {\rm Tr}( \rho K_j^\dagger K_j) = \sum_\mu p^*_\mu q^\mu_j$.
\end{proof}

This proposition demonstrates that we really only need to prove the strong monotonicity case for pure states for convex roof constructions. The demonstration of strong convexity is typically the hardest part of the process, which is greatly simplified by considering only pure states.

\end{document}